\numberwithin{equation}{section}
\newtheorem{theorem}{Theorem}[section]
\newtheorem{lemma}[theorem]{Lemma}
\newtheorem{remark}[theorem]{Remark}
\newtheorem{proposition}[theorem]{Proposition}
\newtheorem{definition}[theorem]{Definition}
\newcommand{\cF}{\mathcal{F}}
\newcommand{\EE}{\mathbb{E}}
\newcommand{\var}{\mathbb{V}\mathrm{ar}}
\newcommand{\cov}{\mathbb{C}\mathrm{ov}}
\newcommand{\PP}{\mathbb{P}}
\newcommand{\cB}{\mathcal{B}}
\newcommand{\cP}{\mathcal{P}}
\newcommand{\bone}{\mathbf{1}}
\newcommand{\NN}{\mathbb{N}}
\newcommand{\RR}{\mathbb{R}}
\newcommand{\CC}{\mathbb{C}}
\newcommand{\FF}{\mathbb{F}}
\newcommand{\dd}{\operatorname{d}}
\newcommand{\fatpi}{\boldsymbol{\pi}}
\newcommand{\fatone}{\mathbf{1}}
\newcommand{\diag}{\,\mathbf{diag}}
\newcommand{\fatQ}{\mathbf{Q}}
\newcommand{\frace}{\mathfrak{E}}
\begin{document}
\author{Anita Behme\thanks{Technische Universit\"at Dresden,
		Institut f\"ur Mathematische Stochastik, 01062 Dresden, Germany, e-mail: anita.behme@tu-dresden.de}}
\title{Volatility modeling in a Markovian environment: \\Two Ornstein-Uhlenbeck-related approaches}
\maketitle
\begin{abstract}
 We introduce generalizations of the COGARCH model of Klüppelberg et al. from 2004 and the volatility and price model of Barndorff-Nielsen and Shephard from 2001 to a Markov-switching environment. These generalizations allow for exogeneous jumps of the volatility at times of a regime switch. Both models are studied within the framework of Markov-modulated generalized Ornstein-Uhlenbeck processes which allows to derive conditions for stationarity, formulas for moments, as well as the autocovariance structure of volatility and price process. It turns out that both models inherit various properties of the original models and therefore are able to capture basic stylized facts of financial time-series such as uncorrelated log-returns, correlated squared log-returns and non-existence of higher moments in the COGARCH case.
\end{abstract}

\noindent
{\em Keywords:} Stochastic Volatility, Regime-switching, Continuous-time GARCH model, Markov-modulated GOU process, L\'evy processes

\noindent
{\em AMS 2020 Subject Classifications:} 60G10, 60G51, 60J28

{\em JEL Classification:} C02, C62, E37, G17

\section{Introduction}
\setcounter{theorem}{0} 

While the famous Black-Scholes model models financial price processes as stochastic exponentials of Brownian motions, nowadays, it is a standard approach in financial modeling to consider price processes that depend on an underlying stochastic volatility process that exhibits jumps.

A prominent continuous-time model of this type is the stochastic volatility model introduced by Barndorff-Nielsen and Shephard \cite{Barn:Shep:2001} in 2001. In this BNS model, the squared volatility process $V$ and the log asset price $G$ are defined to satisfy the equations 
\begin{align}\label{eq-BNS}
\dd V_t &= -\lambda V_t\,\dd t + \dd L_{\lambda t},\\
\dd G_t &=  (\mu+\beta V_t) \dd t + \sqrt{V_t} \,\dd W_t + \rho \,\dd\widetilde L_{\lambda t}, \quad t\geq 0, \nonumber
\end{align}
where $\lambda>0$, $\mu,\beta\in\RR$, $\rho\in\RR$ is a leverage parameter, $L=(L_t)_{t\ge0}$ is a non-decreasing L\'evy process with centered version $\widetilde L_t=L_t-\EE[L_t]$, and $W=(W_t)_{t\ge0}$ is a standard Brownian motion independent of $L$. The volatility process $V$ is assumed to be the stationary solution of \eqref{eq-BNS}. This implies in particular that it is a special case of a stationary generalized Ornstein-Uhlenbeck (GOU) process, i.e. of a stationary solution to an SDE of the form
	\begin{align}\label{MMGOUSDE}
		\dd V_t=V_{t-}\dd U_t+ \dd K_t,\qquad t\geq 0, 
	\end{align}
for a bivariate Lévy process $(U_t,K_t)_{t\geq 0}$ with $U$ having no jumps of size less or equal to $-1$, see e.g. \cite{BLM10}.

Another, mathematically closely related, continuous-time model for a stochastic volatility process with jumps is given by the COGARCH~model. This can be seen as continuous-time version of the celebrated GARCH(1,1) model, where the squared volatility and price time series are supposed to solve
\begin{align} \label{eq-GARCH}
	V_n & = \beta + \lambda G_{n-1}^2 + \delta V_{n-1}, \\
	G_n&= \epsilon_n \sqrt{V_n}, \quad n\in\NN, \nonumber 
\end{align}
where $\beta>0$, $\lambda, \delta\geq 0$, and for some i.i.d. noise $(\epsilon_n)_{n\in\NN}$ with $\EE[\epsilon_n]=0,$ and $\var(\epsilon_n)=1$. By embedding this model in a continuous-time setting and replacing the i.i.d. noise by jumps of a Lévy process, in 2004 Klüppelberg et al. \cite{KLM2004} derived the COGARCH model where the squared volatility process $V$ and the log asset price are given by 
\begin{align} \label{eq-COGARCH}
\dd V_t =&  V_{t-} \Big(\log \delta \dd t + \frac{\lambda}{\delta} \dd[L,L]^d_t\Big) + \beta \dd t,\\
\dd G_t =& \sqrt{V_{t-}} \,\dd L_t, \quad t\geq 0,\nonumber
\end{align}
where $\beta>0, \lambda\geq 0$, $0<\delta<1$, $L$ is an arbitrary  L\'evy process with non-zero jump measure and $[L,L]^d$ is the pure jump part of the quadratic variation process of $L$, cf. \cite[Chapter II.6]{Protter}.  The volatility process $V$ is assumed to be the stationary solution of \eqref{eq-COGARCH}, and therefore, again, it is a special case of a stationary GOU process. \\

Due to the fact that both continuous-time models mentioned above use a special case of a GOU process as volatility model, they share many properties; see \cite{KLM06} for a detailed comparison of the two approaches. In particular, both models have in common that jump sizes in volatility and price exhibit a fixed deterministic relationship; cf.~\cite{JKMc}. This, however, is not very realistic when considered on a large time-scale. Several attempts to overcome this drawback have been made, e.g. by defining multifactor models as superpositions; cf.~\cite{BehmeChong, BNsuper,BN-S:2011}. In this article we choose a different approach and  consider volatility models in a random environment, where the environment is modeled by a continuous-time Markov chain. Dating back to the 80's, cf. \cite{hamilton}, Markov-switching models have already proven to be a reasonable tool in finance and other areas; see e.g. \cite{AsmussenGram, Buffington, Deelstra, Hainaut, Hamiltonbook, JiangPistorius, Momeya}, the review article \cite{AngTimmermann}, and many others. So, already in 2001, in \cite{FrancqRoussignol} a Markov-switching GARCH($p,q$) (MSGARCH($p,q$)) model has been introduced. Here, $p$ is the order of the GARCH terms (denoted as $V_n$) and $q$ denotes the order of the ARCH noise (here $G_n$). In the MSGARCH(1,1) case this generalizes \eqref{eq-GARCH} to 
\begin{align}\label{eq-MSGARCH}
		V_n & = \beta(M_{n-1}) + \lambda(M_{n-1}) G_{n-1}^2 + \delta (M_{n-1}) V_{n-1}, \\
	G_n&= \epsilon_n \sqrt{V_n}, \quad n\in\NN, \nonumber 
\end{align}
where $(M_n)_{n\in\NN_0}$ is a stationary, irreducible, aperiodic Markov chain on a finite set $\{1,\ldots, N\}$, $\beta(k)>0$, $\lambda(k), \delta(k)\geq 0$, for $k=1,\ldots, N$, and  $(\epsilon_n)_{n\in\NN}$ is an i.i.d. noise sequence, independent of $(M_n)_{n\in\NN_0}$, with $\EE[\epsilon_n]=0,$ and $\var(\epsilon_n)=1$, see also \cite[Section 12.2.2]{FZ2} for details. This MSGARCH model has been shown to be efficient in capturing varying volatility states appearing in data by \cite{BenSaida}. Recently, also in continuous-time first attempts to embed the COGARCH in a randomly switching environment have been made e.g. in \cite{LiLiu} and \cite{MbaMwambi}, where the latter article proposes an application of the resulting model for cryptocurrency portfolio selection. The approach used in both sources relies on concatenations of COGARCH processes with the switching mechanism modeled by a continuous-time Markov chain. \\

Still, despite the success of Markov-switching model in finance, no general attempt to study properties of Markov-switching versions of the BNS and the COGARCH model has been made. In this paper we therefore follow the approach of Klüppelberg et al. to introduce a continuous-time model of MSCOGARCH type. As it turns out, the obtained model nicely fits into the framework of recently introduced Markov-switching extensions of GOU processes; cf. \cite{BS_MMGOU, HuangetalMMOU}. This in turn allows for a rather direct extension of various known properties of the COGARCH model to its Markov-switching counterpart. The approach via Markov modulated GOU processes moreover generalizes the previous approaches in \cite{LiLiu, MbaMwambi} and allows for a broader class of processes. In particular we incorporate exogeneous jumps in the volatility at times of regime switches into our model. This allows to comprise the often observed stylized fact, that conditional volatility tends to jump upwards substantially at the onset of a turbulent period (cf. \cite{dueker}), into the model.\\
In the subsequent Section \ref{S-MSBNS} we define a Markov-switching counterpart of the BNS model \eqref{eq-BNS}. Also in this case, the resulting volatility process naturally turns out to be a Markov modulated GOU process, again allowing for a quick derivation of many basic properties and stylized features of volatility and price process such as e.g. uncorrelated log-returns and correlated squared log-returns.\\
The article closes with a short discussion and outlook in Section \ref{S-end}.

\section{Preliminaries}
\setcounter{theorem}{0} 

\subsection{Markov additive processes}

Throughout this article, let $(\Omega, \mathcal{F}, \FF=(\mathcal{F}_t)_{t\geq 0},\PP)$ be a filtered probability space and let $S=\{1,2,\ldots, N\}$ be a finite set. A Markov process $(X,J)=(X_t,J_t)_{t\geq 0}$ on $\RR^d\times S$, $d\geq 1$, is called a \emph{($d$-dimensional) Markov additive process with respect to the filtration $\FF$ ( $\FF$-MAP)}, if for all $s,t\geq 0$ and for all bounded and measurable functions $f:\RR^d\to\RR$, $g:S\to \RR$
\begin{align}\label{MAPdefinition}
	\EE_{J_0}\left[f(X_{s+t}-X_s)g(J_{s+t})|\mathcal{F}_s\right]=\EE_{J_s}\left[f(X_t-X_0)g(J_t)\right].
\end{align}
Hereby, for any $j\in S$ we write $\PP_j(\cdot):=\PP(\cdot|J_0=j)$ and $\EE_j[\cdot]$ for the expectation with respect to $\PP_j$. 
If not stated otherwise we assume $\FF$ to be the smallest filtration that includes the natural filtration induced by $(X,J)$, and satisfies the \emph{usual hypotheses} of right-continuity and completeness, see e.g. \cite{Protter}. In this case we call $\FF$ the \emph{augmented natural filtration} induced by $(X,J)$ and simply call $(X,J)$ a MAP.\\
Intuitively, MAPs can be understood as switching Lévy processes with additional jumps at times of regime switches. More precisely, since $S$ has been chosen to be finite, the defining property \eqref{MAPdefinition} of the MAP $(X,J)$ implies (cf.~\cite[Chapter XI.2]{Asmussen_AppliedProbandQueues}) that there exists a sequence of $N$ independent $\RR^d$-valued Lévy processes $\lbrace X^{(j)},j\in S\rbrace$ such that, whenever $J_t=j$ on some time interval $(t_1,t_2)$, the additive component $(X_t)_{t_1<t<t_2}$ of the MAP $(X,J)$ behaves in law as $X^{(j)}$.\\
Recall at this point that any Lévy process $(X^{(j)}_t)_{t\geq 0}$ can be uniquely determined by its characteristic triplet $(\gamma_{X^{(j)}},\Sigma^2_{X^{(j)}},\nu_{X^{(j)}})$, where $\gamma_{X^{(j)}}$ denotes the \emph{location parameter}, $\Sigma^2_{X^{(j)}}$ the \emph{Gaussian covariance (matrix)}, and $\nu_{X^{(j)}}$ the \emph{Lévy measure} of the process. \\
We denote the jump times of the background driving Markov chain $J$ of the MAP $(X,J)$ by $\{T_n, n\in\NN\}$. Whenever $J$ jumps at a time, say, $T_k$ from state $i$ to state $j$, it induces an additional jump $Z^{ij}_{X,k}$ for $X$, whose distribution $F_{X}^{ij}$ depends only on $(i,j)$ and neither on the jump time nor on the jump number, and it is independent of all other sources of randomness, cf.~\cite[Chapter~XI.2]{Asmussen_AppliedProbandQueues}. \\
Altogether, we may assume any MAP $(X,J)$ to be c\`adl\`ag. This allows to derive the path decomposition of its additive component
\begin{align}\label{MAPpathdescription}
	X_t=X_0+X_{1,t} +X_{2,t}=X_0+ \int_{(0,t]}  \dd  X_{s}^{(J_{s})}+\sum_{n\geq 1}\sum_{\substack{i,j\in S,\\i\neq j}} Z_{X,n}^{ij} \mathds{1}_{\lbrace J_{T_{n-1}}=i,J_{T_{n}}=j,T_n\leq t\rbrace}.
\end{align}
Conversely, assuming that  $J$ is a continuous-time Markov chain with state space $S$ and jump times $\{T_n, n\in\NN\}$, and $X$ has a path decomposition as in \eqref{MAPpathdescription}, the process $(X,J)$ is a MAP. \\
In this paper we always assume that $X_0=0$. \\
We denote the intensity matrix of the background driving chain $J$ by $\fatQ=(q_{ij})_{i,j\in S}$. 
We will assume $J$ to be ergodic with unique stationary distribution  $\fatpi=(\pi_j)_{j\in S}$ and write   
\begin{align*}
	\PP_\pi(\cdot):=\sum_{j\in S}\pi_j\PP_j(\cdot).
\end{align*}

We refer to \cite{Asmussen_AppliedProbandQueues, DDK_Annals} for more thorough introductions into the theory of MAPs.

\subsection{Markov modulated generalized Ornstein-Uhlenbeck processes}

Given a bivariate MAP $((\xi,\eta),J)=((\xi_t,\eta_t),J_t)_{t\geq 0}$, the \emph{Markov modulated generalized Ornstein-Uhlenbeck  (MMGOU) process driven by $((\xi,\eta),J)$} has been defined in \cite{BS_MMGOU} as the process $(V_t)_{t\geq 0}$ given by 
\begin{align}\label{MMGOUexplicit}
	V_t=e^{-\xi_t} \left(V_0+\int_{(0,t]} e^{\xi_{s-}} \dd\eta_s\right),
\end{align}
where the random variable $V_0$ is conditionally independent of $((\xi_t, \eta_t),J_t)_{t\geq 0}$ given $J_0$.
Moreover, it has been shown in \cite{BS_MMGOU} that this MMGOU process is the unique solution of the stochastic differential equation 
\begin{align}\label{MMGOUSDE}
	\dd V_t=V_{t-}\dd U_t+ \dd K_t,\qquad t\geq 0, 
\end{align}
for another bivariate Markov additive process $((U,K),J)=((U_t, K_t),J_t)_{t\geq 0}$ which is uniquely determined by $((\xi,\eta),J)$. \\
Further, in that source, assuming that the background driving Markov chain $J$ is ergodic with stationary distribution $\pi$, necessary and sufficient conditions for stationarity of the MMGOU process have been derived. Moreover, in \cite{BSdT}, moments of (stationary) MMGOU processes are studied.

\subsection{Notations}

Throughout our expositions we use the notations d-$\lim$ or $\overset{d}\rightarrow$ for distributional convergence of random variables. For a finite set $S$, $\cP(S)$ is the power set, while $\cB(\RR)$ and $\cB(\RR_+)$ denote the Borel $\sigma$ algebra on $\RR$ and $\RR_+=[0,\infty)$, respectively. \\
When considering stochastic processes in $\RR^2$ (or $\RR^d$), we will deliberately switch between the notation as column vector $\Big(\begin{smallmatrix}
	\xi \\ \eta 
\end{smallmatrix}\Big)$ or row vector $(\xi,\eta)$ without any indication to simplify the reading. In all other instances we a-priori assume a vector $\mathbf{v} \in\RR^N$ to be a column vector, and we denote its transpose as $\mathbf{v}^\top$. Special vectors that we will frequently use are $\bone=(1,\ldots, 1)^\top$, and $\mathbf{e}_j=(0,\ldots, 0, 1, 0, \ldots, 0)^\top$ with the single non-zero entry in the $j$'s component. For a vector $\mathbf{a}=(a_j)_{j\in S}$, $\diag (a_j, j\in S)=\diag(\mathbf{a})$ denotes the diagonal matrix with entries $a_j,j=1,\dots, N$. Further, ``$\circ$'' means elementwise multiplication of matrices, while standard matrix multiplication is denoted with the usual multiplication sign ``$\cdot$'' that is also used for scalars.

\section{A COGARCH model in a Markovian environment} \label{S-MSCOGARCH}
\setcounter{theorem}{0} 

\subsection*{Definition of a COGARCH process in a Markovian environment}

The central step in the original derivation of the COGARCH volatility process in \cite{KLM2004} is the observation, that the defining equations of the GARCH time series \eqref{eq-GARCH} can be solved recursively. Applying the same approach on the MSGARCH time series \eqref{eq-MSGARCH} yields the expressions
\begin{align*}
	V_n&=  \sum_{i=0}^{n-1} \beta(M_i) \prod_{k={i+1}}^{n-1} \left(\lambda (M_k) \epsilon_k^2 + \delta(M_k) \right) + V_0 \cdot \prod_{k=0}^{n-1} \left(\lambda (M_k) \epsilon_k^2 + \delta(M_k) \right)\\
	&= \left( \int_{(0,n)} \beta(M_{\lfloor s \rfloor})  \exp\Bigg( - \sum_{k=0}^{{\lfloor s \rfloor}} \log \left(\lambda (M_k) \epsilon_k^2 + \delta(M_k) \right) \Bigg) \dd s  + V_0 \right) \\
	& \quad \cdot \exp\left( \sum_{k=0}^{n-1} \log \left(\lambda (M_k) \epsilon_k^2 + \delta(M_k) \right) \right), \\
	G_n &= \epsilon_n \sqrt{V_n} = \left(\sum_{i=0}^n \epsilon_i - \sum_{i=0}^{n-1} \epsilon_i \right)\sqrt{V_n},
\end{align*}  
for the squared volatility and price time series. We now continue as in \cite{KLM2004} and embed the above model into continuous time by replacing the innovations $(\epsilon_n)_{n\in\NN_0}$ by the increments of a Lévy process $(L_t)_{t\geq 0}$. Additionally, it is natural to replace the appearing discrete Markov chain $(M_k)_{k\in\NN_0}$ by a continuous time Markov chain $(J_t)_{t\geq 0}$, where  $(L_t)_{t\geq 0}$ and $(J_t)_{t\geq 0}$ are supposed to be independent.\\

Thus, let $(L_t)_{t\geq 0}$ be a Lévy process with non-zero Lévy measure, and let $(J_t)_{t\geq 0}$ be an ergodic Markov chain on $S=\{1,\ldots, N\}$, independent of $(L_t)_{t\geq 0}$. We use the same parametrization as in \cite{KLM2004} and fix constants $\beta(j)>0$, $\lambda(j)\geq 0, 1>\delta(j)> 0$, for $j=1,\ldots, N$.
Define the auxiliary bivariate Lévy processes
\begin{align}
	\begin{pmatrix}
		\xi_{t}^{(j)}\\
		\eta_{t}^{(j)} 
	\end{pmatrix}:= \begin{pmatrix}
		- t \log(\delta(j)) - \sum_{0<s\leq t} \log \left( 1+ \frac{\lambda(j)}{\delta(j)}(\Delta L_s)^2\right)\\
	 t \beta(j)
	\end{pmatrix},\quad  t\geq 0, j=1,\ldots, N, \label{eq-LevyxietaMSCOGARCH}
\end{align}
 and define a process $((\xi,\eta),J)=((\xi_t,\eta_t),J_t)_{t\geq 0}$ by setting 
 \begin{align} \label{eq-xietaMSCOGARCHsimple} \begin{pmatrix}
 	\xi_t\\ \eta_t 
 \end{pmatrix} := \begin{pmatrix}
 	\int_{(0,t]} \dd \xi_s^{(J_s)} \\ \int_{(0,t]} \dd \eta_s^{(J_s)}
 \end{pmatrix}, \quad t\geq 0.\end{align}
A natural definition of a Markov switching COGARCH squared volatility process is given by 
$$V_t:= e^{-\xi_t} \Bigg(V_0 + \int_{(0,t]} e^{\xi_{s-}} \dd \eta_s \Bigg), \quad t\geq 0.$$

In order to enhance the proposed squared volatility process with the needed structure, we make the following observation.

\begin{lemma}\label{lem-xietaMAP}
Let $\FF=(\cF_t)_{t\geq 0}$ be the augmented natural filtration induced by $(L,J)$.	Then $((\xi,\eta),J)$ as defined in \eqref{eq-xietaMSCOGARCHsimple} is an $\FF$-MAP. 
\end{lemma}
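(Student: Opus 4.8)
The plan is to verify the defining identity \eqref{MAPdefinition} for $((\xi,\eta),J)$ directly, rather than trying to read it off the path decomposition \eqref{MAPpathdescription}. The starting point would be to record that, combining \eqref{eq-xietaMSCOGARCHsimple} with \eqref{eq-LevyxietaMSCOGARCH}, the increment of $(\xi,\eta)$ over an interval $(s,s+t]$ can be written as
\begin{align*}
\begin{pmatrix}\xi_{s+t}-\xi_s\\ \eta_{s+t}-\eta_s\end{pmatrix}=\begin{pmatrix}-\int_{(s,s+t]}\log\delta(J_u)\,\dd u-\sum_{s<u\leq s+t}\log\!\left(1+\tfrac{\lambda(J_u)}{\delta(J_u)}(\Delta L_u)^2\right)\\[2pt] \int_{(s,s+t]}\beta(J_u)\,\dd u\end{pmatrix},
\end{align*}
so that this increment is a measurable functional $\Phi$ of the two inputs $(J_u)_{s<u\leq s+t}$ and $(L_u-L_s)_{s\le u\le s+t}$ alone. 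I would note in passing that each $(\xi^{(j)},\eta^{(j)})$ is a genuine Lévy process, the jump sum converging because $\log(1+\tfrac{\lambda(j)}{\delta(j)}x^2)\le\tfrac{\lambda(j)}{\delta(j)}x^2$ and $\int_{|x|\le 1}x^2\,\nu_L(\dd x)<\infty$.

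I expect the one genuinely delicate point to be that the component processes $(\xi^{(j)},\eta^{(j)})$, $j\in S$, are all driven by the single Lévy process $L$ and are therefore \emph{not} independent; hence one cannot naively invoke the ``$N$ independent Lévy processes'' representation underlying \eqref{MAPpathdescription}. The resolution I would stress is that $\Phi$ reads $L$ only through its increments strictly after time $s$, and these are independent of $\cF_s$ by the independent-increments property of $L$ together with $L\perp J$. Thus independence of the $(\xi^{(j)},\eta^{(j)})$ across $j$ is never required; only the independence of the increments of the single process $L$ over disjoint time sets is used, and this is exactly what keeps the Markov-additive structure intact.

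With that observation in hand, the remaining conditioning step is routine. Given $\cF_s$, the Markov property of $J$ gives that $(J_{s+u})_{0\le u\le t}$ has conditional law depending only on $J_s$ and equal to the $\PP_{J_s}$-law of $(J_u)_{0\le u\le t}$; simultaneously $(L_{s+u}-L_s)_{0\le u\le t}$ is independent of $\cF_s$ and of the entire path of $J$, with the same law as $(L_u)_{0\le u\le t}$. Consequently, conditionally on $\cF_s$, the joint law of the pair $\big((J_{s+u})_{0\le u\le t},(L_{s+u}-L_s)_{0\le u\le t}\big)$ is determined by $J_s$ alone and coincides with its $\PP_{J_s}$-analogue started at time $0$. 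Pushing this through $\Phi$ and testing against bounded measurable $f\colon\RR^2\to\RR$ and $g\colon S\to\RR$ would yield
\begin{align*}
\EE_{J_0}\left[f\big((\xi,\eta)_{s+t}-(\xi,\eta)_s\big)g(J_{s+t})\,\big|\,\cF_s\right]=\EE_{J_s}\left[f\big((\xi,\eta)_t-(\xi,\eta)_0\big)g(J_t)\right],
\end{align*}
which is precisely \eqref{MAPdefinition} and completes the verification.
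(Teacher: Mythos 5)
Your proposal is correct and follows essentially the same route as the paper: both verify \eqref{MAPdefinition} directly, flag that the processes $(\xi^{(j)},\eta^{(j)})$, $j\in S$, are dependent (all driven by the single process $L$) so the path decomposition \eqref{MAPpathdescription} cannot be invoked, and resolve this by observing that the increment over $(s,s+t]$ reads $L$ only through its post-$s$ increments, which are independent of $\cF_s$ and stationary in law, while $J$ enters only through its Markov property. The only formal difference is that the paper separately spells out that $((\xi,\eta),J)$ is a Markov process, which the definition of an $\FF$-MAP presupposes; your conditional-law statement yields this immediately as well, since $(\xi_s,\eta_s)$ is $\cF_s$-measurable.
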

\begin{proof}
Clearly $((\xi,\eta),J)$ is adapted to $\FF$ by construction. Further, let $0\leq s\leq t$, $B\in \cB(\RR^2)$, $C\in \cP(S)$, then
\begin{align*}
\lefteqn{\PP((\xi_t,\eta_t)\in B, J_t\in C | \cF_s)}\\
&= \PP\Bigg( \bigg( \xi_s - \int_{(s,t]} \log(\delta(J_u)) \dd u - \sum_{s<u\leq t} \log\Big(1+\frac{\lambda(J_u)}{\delta(J_u)} (\Delta L_u)^2\Big), \eta_s + \int_{(s,t]} \beta(J_u) \dd u \bigg) \in B, \\ & \qquad \qquad J_t\in C \bigg| \cF_s  \Bigg)\\
&= \PP\left(  (\xi_t,\eta_t)\in B, J_t\in C | (\xi_s, \eta_s), J_s  \right)
\end{align*}
due to the Markov property of $J$ as well as the independent increments of the Lévy process $L$. Thus $((\xi,\eta),J)$ is a Markov process.\\ 
It remains to prove the MAP property. Observe that the Lévy processes $\xi^{(j)}$ in \eqref{eq-LevyxietaMSCOGARCH} are dependent. Therefore we can not directly argue via the path decomposition \eqref{MAPpathdescription} at this point. Still, for any $s,t\geq 0$, and any $f:\RR^2\to \RR$, $g:S\to \RR$ bounded and measurable,
\begin{align*}
\lefteqn{	\EE_{J_0} [f(\xi_{s+t} - \xi_s, \eta_{s+t} - \eta_s) g(J_{s+t})) |\cF_s ] }\\
&= \EE_{J_0} \Bigg[  f\bigg( - \int_{(s,t]} \log(\delta(J_u)) \dd u - \sum_{s<u\leq t} \log\Big(1+\frac{\lambda(J_u)}{\delta(J_u)} (\Delta L_u)^2\Big), \int_{(s,t]} \beta(J_u) \dd u \bigg) g(J_{s+t}) \bigg| \cF_s  \Bigg]\\
		&= \EE_{J_s}[f(\xi_t, \eta_t) g(J_t)],
\end{align*}
due to the stationary increments of the Lévy process $L$, which proves the claim.
\end{proof}

 The path decomposition \eqref{MAPpathdescription} of the MAP $((\xi,\eta), J)$ motivates us to expand the definition of a Markov switching COGARCH model by allowing additional jumps at times of regime switches as indicated in the introduction. The resulting definition is as follows.

 \begin{definition}\rm
 	Let $L=(L_t)_{t\geq 0}$ be a Lévy process with non-zero Lévy measure $\nu_L$, and let $J=(J_t)_{t\geq 0}$ be an ergodic Markov chain on $S=\{1,\ldots, N\}$, independent of $L$. Let $\beta(j)>0$, $\lambda(j)\geq 0, 1>\delta(j)> 0$, for $j\in S$, be constants.
 	Define the auxiliary bivariate Lévy processes $(\xi_{t}^{(j)}, \eta_{t}^{(j)})_{t\geq 0}$, $j\in S$, via \eqref{eq-LevyxietaMSCOGARCH} and fix some jump distributions $F^{ij}$, $i,j\in S$, $i\neq j$, on $\RR\times \RR_+$.\\
 Define the MAP $((\xi,\eta),J)$ by setting 
 \begin{align}\begin{pmatrix}
 	\xi_t\\ \eta_t 
 \end{pmatrix} := \begin{pmatrix}
 \int_{(0,t]} \dd \xi_s^{(J_s)} \\ \int_{(0,t]} \dd \eta_s^{(J_s)}
\end{pmatrix} + \sum_{n\geq 1}\sum_{\substack{i,j\in S,\\i\neq j}}Z_{n}^{ij} \mathds{1}_{\lbrace J_{T_{n-1}}=i,J_{T_{n}}=j,T_n\leq t\rbrace}, \label{eq-xietaMMCOGARCH} \end{align} 
for i.i.d. sequences $\{Z_n^{ij}, n\in\NN\}$ with distribution $F^{ij}$, independent of all other sources of randomness. Then the \emph{Markov switching COGARCH (MSCOGARCH) model} $(V,G)=(V_t,G_t)_{t\geq 0}$  consists of squared volatility and price process given by
\begin{align} \label{eq-MSCOGARCH}
	V_t&:= e^{-\xi_t} \bigg(V_0 + \int_{(0,t]} e^{\xi_{s-}} \dd \eta_s \bigg),\\
		G_t&:= \int_{(0,t]} \sqrt{V_{t-}} \dd L_t, \quad t\geq 0, \nonumber 
\end{align} 
for some random variable $V_0$ that is conditionally independent of $((\xi,\eta),J)$ given $J_0$.
 \end{definition}

Observe that we chose the additional jumps at regime switches to take values in $\RR\times \RR_+$ in order to ensure positivity of the squared volatility process.

\begin{figure}[h]
	\includegraphics[width=\textwidth]{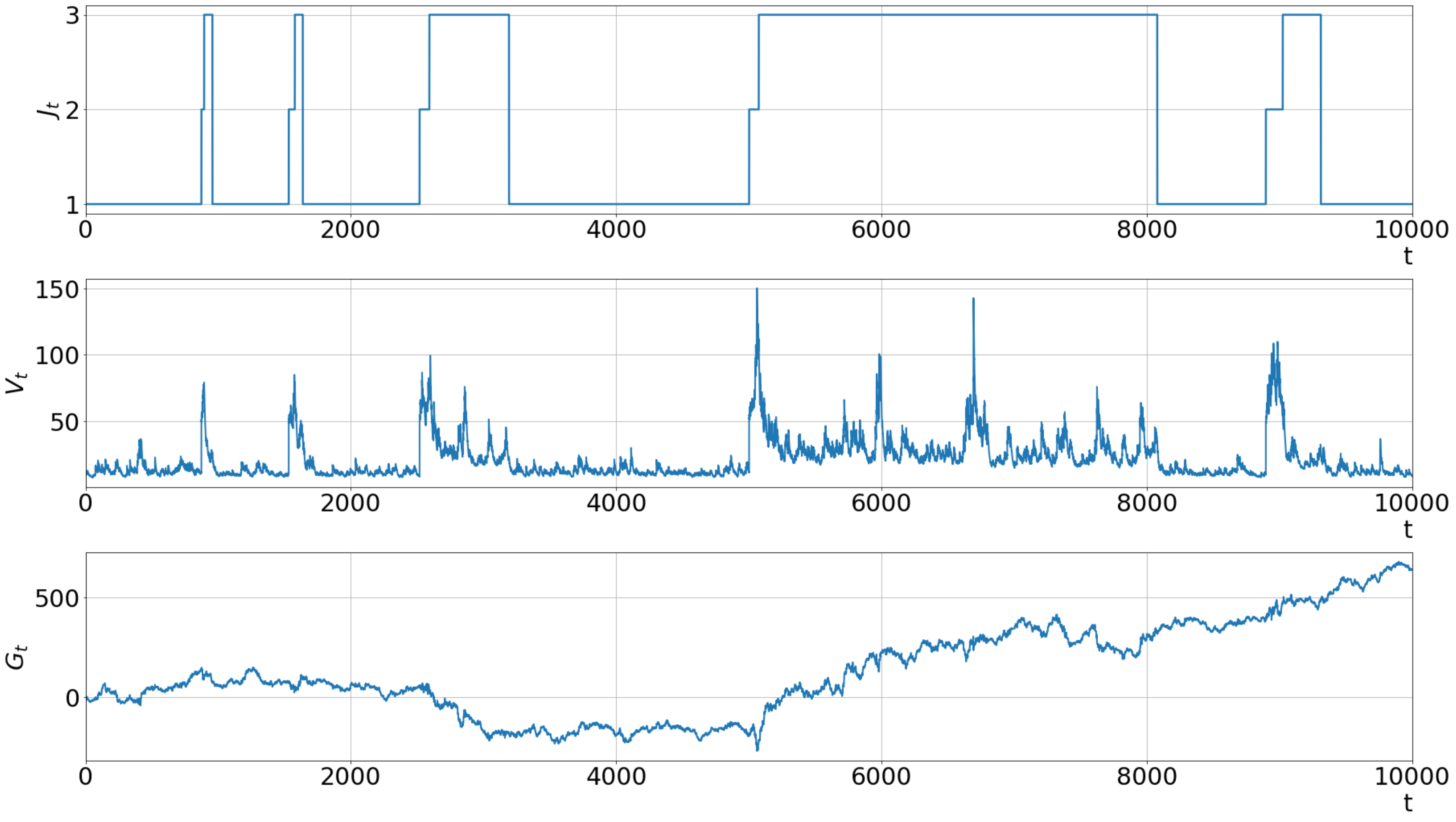}
	\caption{Simulated paths of a background driving Markov chain $(J_t)_{t\geq 0}$ with $|S|=3$ states (top), a resulting squared MSCOGARCH volatility process (middle), and a resulting MSCOGARCH price process (bottom). The simulated driving Lévy process $(L_t)_{t\geq 0}$ has been chosen as compound Poisson process with standard normal distributed jumps. The parameters of the three COGARCH regimes are $\delta=(0.9, 0.93, 0.92)$, $\lambda=(0.042, 0.047, 0.044)$, and $\beta=(0.7, 2, 1)$, such that all appearing COGARCH processes are stationary and the mean volatility is highest in regime $2$, lowest in regime 1. The additional jumps at times of regime switches have been realized as exponentially distributed random variables.}
\end{figure}

\begin{remark}\rm
	It follows from the above definition that between two consecutive jumps of the background driving chain $(J_t)_{t\geq 0}$ the MSCOGARCH process $(V,G)$ behaves just as a standard COGARCH process with parameters $\beta(J_t), \lambda(J_t), \delta(J_t)$ as defined in \eqref{eq-COGARCH}. In case the additional jumps at times of regime switches are all set to zero, the MSCOGARCH can thus be seen as a concatenation of COGARCH processes, see also \cite[Rem. 2.15]{BS_MMGOU}. As mentioned in the introduction such concatenations have been introduced e.g. in \cite{MbaMwambi} for a two-state background driving Markov chain, and lately in \cite{LiLiu}.
\end{remark}

By comparing the definition of the squared volatility in \eqref{eq-MSCOGARCH} and the process given in \eqref{MMGOUexplicit} we immediately note that the squared MSCOGARCH volatility is a special case of an MMGOU process. This allows us to apply various results obtained in \cite{BS_ExpFuncMAP, BS_MMGOU, BSdT} to derive basic properties of the squared MSCOGARCH volatility as presented subsequently.

As we will frequently need to split up the components of the additional jumps at times of a regime switch, in comparison to  \eqref{eq-xietaMMCOGARCH} we use the notations 
\begin{equation} \label{eq_addjumpscomponents} \sum_{n\geq 1}\sum_{\substack{i,j\in S,\\i\neq j}}Z_{n}^{ij} \mathds{1}_{\lbrace J_{T_{n-1}}=i,J_{T_{n}}=j,T_n\leq t\rbrace} =: \sum_{\substack{i,j\in S,\\i\neq j}}\sum_{0<s\leq t}\left( \begin{smallmatrix}
	\Delta \xi_{2,s}^{ij} \\ \Delta \eta_{2,s}^{ij}
\end{smallmatrix}\right)  =: \sum_{0<s\leq t}\left( \begin{smallmatrix}
	\Delta \xi_{2,s} \\ \Delta \eta_{2,s}
\end{smallmatrix}\right),\end{equation}
such that in particular 
\begin{align*}
	\begin{pmatrix}
		\xi_{t}\\
		\eta_{t} 
	\end{pmatrix}= \begin{pmatrix}
		- \int_{(0,t]} \log(\delta(J_s)) \dd s - \sum_{0<s\leq t} \log \left( 1+ \frac{\lambda(J_s)}{\delta(J_s)}(\Delta L_s)^2\right) + \sum_{0<s\leq t} \Delta \xi_{2,s}\\
		\int_{(0,t]} \beta(J_s) \dd s  + \sum_{0<s\leq t} \Delta \eta_{2,s}
	\end{pmatrix},\quad  t\geq 0.
\end{align*}

We start by providing the stochastic differential equation solved by the MSCOGARCH process.

\begin{proposition}\label{prop_MarkovMSCOGARCH}
The process $((V,G), J)$ is a Markov process and adapted to $\FF$, the augmented natural filtration induced by $(L,J)$. Moreover, the MSCOGARCH process $(V,G)$ satisfies the system of stochastic differential equations
\begin{align*}
	\dd V_t&= V_{t-} \dd U_t + \dd K_t,\\
	\dd G_t&= \sqrt{V_{t-}} \dd L_t,  \quad t\geq 0,
\end{align*}
with $G_0:=0$ and for the $\FF$-MAP $((U,K), J)$ with additive component
	\begin{align}
		\begin{pmatrix}
			U_{t}\\
			K_{t} 
		\end{pmatrix}:=  \begin{pmatrix}
			\int_{(0,t]} \log(\delta(J_s)) \dd s + \sum_{0<s\leq t}  \frac{\lambda(J_s)}{\delta(J_s)}(\Delta L_s)^2     + \sum_{0<s\leq t} \left(e^{ -\Delta \xi_{2,s}} -1 \right) \\
			\int_{(0,t]} \beta(J_s) \dd s  + \sum_{0<s\leq t}e^{ -\Delta \xi_{2,s}} \Delta \eta_{2,s}
		\end{pmatrix},\quad  t\geq 0.
		\label{eq-ULMSCOGARCH}
	\end{align}
\end{proposition}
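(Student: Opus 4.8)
The plan is to leverage the observation, already noted before the statement, that the squared volatility $V$ in \eqref{eq-MSCOGARCH} is precisely an MMGOU process in the sense of \eqref{MMGOUexplicit}, driven by the $\FF$-MAP $((\xi,\eta),J)$ of Lemma \ref{lem-xietaMAP}. Invoking the results of \cite{BS_MMGOU} recalled in the preliminaries, this immediately yields that $V$ is the unique solution of $dV_t = V_{t-}\,dU_t + dK_t$ for an $\FF$-MAP $((U,K),J)$ that is uniquely determined by $((\xi,\eta),J)$; in particular the existence and the MAP property of $((U,K),J)$, as well as the first SDE, come for free. The remaining work is thus twofold: (i) to identify $((U,K),J)$ explicitly and check that it equals \eqref{eq-ULMSCOGARCH}, and (ii) to upgrade the Markov property from $(V,J)$ to the full triple $((V,G),J)$.

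For step (i), I would recover $U$ from the defining relation $e^{-\xi_t}=\mathcal{E}(U)_t$, i.e. $e^{-\xi_t}$ is the stochastic (Dol\'eans--Dade) exponential of $U$. Applying It\^o's formula to $f(x)=e^{-x}$ and using that $\xi$ carries no Brownian component (its continuous part is the absolutely continuous drift $-\int_{(0,t]}\log\delta(J_s)\,ds$), one obtains $U_t = \int_{(0,t]}\log\delta(J_s)\,ds + \sum_{0<s\le t}(e^{-\Delta\xi_s}-1)$. The jump sum then splits according to the two sources of jumps of $\xi$: at a jump time of $L$ one has $\Delta\xi_s=-\log(1+\tfrac{\lambda(J_s)}{\delta(J_s)}(\Delta L_s)^2)$, contributing $e^{-\Delta\xi_s}-1=\tfrac{\lambda(J_s)}{\delta(J_s)}(\Delta L_s)^2$, whereas at a regime switch $\Delta\xi_s=\Delta\xi_{2,s}$ contributes $e^{-\Delta\xi_{2,s}}-1$. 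Here I would use the independence of $L$ and $J$ to note that, almost surely, jumps of $L$ and jumps of $J$ do not coincide, so the two contributions separate cleanly, reproducing the stated expression for $U$. To obtain $K$, I would write $V_t=e^{-\xi_t}R_t$ with $R_t=V_0+\int_{(0,t]}e^{\xi_{s-}}\,d\eta_s$ and apply integration by parts $dV = e^{-\xi_{-}}\,dR + R_{-}\,d(e^{-\xi}) + d[e^{-\xi},R]$. Since $e^{-\xi_{s-}}e^{\xi_{s-}}=1$, the first term gives $d\eta_t$, the second gives $V_{t-}\,dU_t$, and the covariation term, which here reduces to the sum of products of jumps $\Delta(e^{-\xi})_s\,\Delta R_s=(e^{-\Delta\xi_s}-1)\,\Delta\eta_s$, supplies the extra jump part of $K$. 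As $\eta$ jumps only at regime switches, this yields $K_t=\int_{(0,t]}\beta(J_s)\,ds+\sum_{0<s\le t}e^{-\Delta\xi_{2,s}}\Delta\eta_{2,s}$, matching \eqref{eq-ULMSCOGARCH}.

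For step (ii), adaptedness is clear from the construction, and the second SDE $dG_t=\sqrt{V_{t-}}\,dL_t$ with $G_0:=0$ holds by definition of $G$ in \eqref{eq-MSCOGARCH}. To prove that $((V,G),J)$ is Markov, I would use the cocycle decomposition $V_{t+s}=e^{-(\xi_{t+s}-\xi_t)}\bigl(V_t+\int_{(t,t+s]}e^{\xi_{u-}-\xi_t}\,d\eta_u\bigr)$ together with $G_{t+s}=G_t+\int_{(t,t+s]}\sqrt{V_{u-}}\,dL_u$ to exhibit $(V_{t+s},G_{t+s},J_{t+s})$ as a measurable function of $(V_t,G_t,J_t)$ and of the driver increments over $(t,t+s]$. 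Since, given $J_t$, these increments, of $((\xi,\eta),J)$ and of $L$, are independent of $\mathcal{F}_t$ with law depending only on $J_t$, the conditional law of $(V_{t+s},G_{t+s},J_{t+s})$ given $\mathcal{F}_t$ depends only on $(V_t,G_t,J_t)$, which is the Markov property. The required conditional-independence of the enlarged driver $((\xi,\eta,L),J)$ can be verified exactly as in Lemma \ref{lem-xietaMAP}, using that $L$ is L\'evy and independent of $J$.

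I expect the main obstacle to be the jump bookkeeping in step (i): correctly handling the simultaneity of the $\xi$- and $\eta$-jumps at a regime switch, which occur jointly via $Z_n^{ij}$, while keeping them disjoint from the $L$-driven jumps, and verifying that all quadratic-covariation terms involving continuous martingale parts vanish because neither $\xi$ nor $\eta$ carries a Brownian component. Once this bookkeeping is set up, the identification of $U$ and $K$ and the conditional-independence argument for the Markov property are routine.
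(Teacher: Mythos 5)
Your proposal is correct, and its overall skeleton coincides with the paper's proof: both rest the existence of the $\FF$-MAP $((U,K),J)$ and the volatility SDE on the MMGOU results of \cite{BS_MMGOU} (the paper cites \cite[Prop.~2.11]{BS_MMGOU} explicitly), both treat the price SDE and adaptedness as immediate, and both prove the Markov property of $((V,G),J)$ via the same cocycle decomposition of $(V_t,G_t)$ over $(s,t]$ combined with a conditional-independence property of the driver. The genuine difference is in how the explicit form \eqref{eq-ULMSCOGARCH} is obtained: the paper gets it, together with the MAP property of $((U,K),J)$, directly from the transformation formula in \cite[Prop.~2.11]{BS_MMGOU} applied to the MAP of Lemma \ref{lem-xietaMAP}, whereas you re-derive it by hand, recovering $U$ as the stochastic logarithm of $e^{-\xi}$ via It\^o's formula and $K$ via integration by parts on $V_t=e^{-\xi_t}R_t$. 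Your computation is right, and it has the merit of making visible exactly where each term of \eqref{eq-ULMSCOGARCH} comes from: the drift from the continuous part of $\xi$, the two jump sums from the a.s.\ disjointness of the $L$-jumps and the regime-switch jumps (a consequence of the independence of $L$ and $J$, which the paper leaves implicit in the citation), and the cross term $e^{-\Delta\xi_{2,s}}\Delta\eta_{2,s}$ in $K$ from the quadratic covariation $[e^{-\xi},R]$, using that $\eta$ jumps only at regime switches and that neither $\xi$ nor $\eta$ has a Gaussian component. The price of this self-containedness is length; the paper's citation-based route is shorter and consistent with its systematic reliance on the MMGOU machinery. A further minor difference: for the Markov property, the paper invokes the MAP property of $((\xi,\eta),J)$ (in the form of \cite[Lemma~2.2]{BS_MMGOU}) together with the independent increments of $L$, while you pass to the enlarged driver $((\xi,\eta,L),J)$ and verify its MAP property as in Lemma \ref{lem-xietaMAP}; this is an equivalent but slightly cleaner way to handle the dependence between $\xi$ and $L$, since the jumps of $\xi$ are functions of the jumps of $L$.
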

\begin{proof}
	The given SDE for the squared volatility process, as well as the fact that $((U,K),J)$ is a MAP, follow from Lemma \ref{lem-xietaMAP} by an application of \cite[Prop. 2.11]{BS_MMGOU}. The SDE for the price process is immediate from the definition. Adaptedness of $((V,G),J)$ to $\FF$ is clear by construction. Lastly, to prove the Markov property of $((V,G),J)$, let $0\leq s\leq t$, $B\in \cB(\RR_+\times \RR)$, $C\in \cP(S)$, then
	\begin{align*}
		\lefteqn{\PP((V_t,G_t)\in B, J_t\in C | \cF_s)}\\
		&= \PP\Bigg( \bigg(  e^{-(\xi_t - \xi_s)} V_s + \int_{(s,t]} e^{-(\xi_t-\xi_{u-})} \dd \eta_u , G_s + \int_{(s,t]} \sqrt{V_{u-}} \dd L_u \bigg) \in B, J_t\in C \bigg| \cF_s  \Bigg)\\
		&= \PP\left(  (V_t,G_t)\in B, J_t\in C | (V_s, G_s), J_s  \right),
	\end{align*}
	by the MAP property of $((\xi,\eta),J)$ in the form as proven in \cite[Lemma 2.2]{BS_MMGOU}, and due to the independent increments of $L$.
\end{proof}

\subsection*{Stationarity of the MSCOGARCH volatility process}

As the volatility process in the COGARCH model is assumed to be stationary, we continue our studies with conditions for stationarity of the squared MSCOGARCH volatility process and provide a representation of its stationary distribution in Theorem \ref{thm-statMSCOGARCH}. Necessary and sufficient conditions for stationarity of an MMGOU process have been derived in \cite{BS_MMGOU}, and these may clearly be applied to derive necessary and sufficient conditions for stationarity of the squared MSCOGARCH volatility process. However, as the resulting conditions are technically difficult and hard to check in concrete settings, in this exposition we only present sufficient conditions for the existence of a stationary MSCOGARCH volatility that also allow for a comparison with the classical COGARCH case.

\begin{theorem}\label{thm-statMSCOGARCH}
Consider the squared MSCOGARCH volatility process $(V_t)_{t\geq 0}$ defined in \eqref{eq-MSCOGARCH} and assume that $\lambda(j)>0$ for at least one $j\in S$. Denote the  first return time of $J$ to $j$ by 
$$\tau_1(j):=\inf\{t> 0: J_t=j, J_{t-}\neq j \}.$$
If
\begin{align}
	\kappa_\xi&:= \sum_{j\in S} \pi_j \bigg(-\log(\delta(j)) - \int_\RR \log \Big(1+\frac{\lambda(j)}{\delta(j)} y^2 \Big) \nu_L(\dd y) + \sum_{k\in S\setminus\{j\}}  q_{jk} \int_{\RR\times \RR_+} x \dd F^{jk}(x,y) \bigg)>0, \nonumber \\
	&\text{and }\qquad \qquad  \int_{(1,\infty)} \log q \,\PP_j\bigg(\sup_{T_n\leq  \tau_1(j)} e^{-\xi_{2,T_n}} \Delta \eta_{2,T_n} \in \dd q\bigg)<\infty, \label{eq-statcondint}
\end{align}
and if one chooses
$$V_0\overset{d}= V_\infty:= \text{d-}\lim_{t\to\infty} e^{-\xi_t} \int_{(0,t]} e^{\xi_{s-}} \dd K_s,$$
conditionally independent of $L$ given $J_0$, then $(V_t)_{t\geq 0}$ is strictly stationary. In this case the corresponding price process $(G_t)_{t\geq 0}$ has stationary increments.
\end{theorem}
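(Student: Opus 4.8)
The plan is to recognize that the squared MSCOGARCH volatility process $(V_t)_{t\geq 0}$ is, by the remark following its definition, a special case of an MMGOU process driven by the MAP $((\xi,\eta),J)$ (equivalently by $((U,K),J)$ of Proposition \ref{prop_MarkovMSCOGARCH}). Hence the whole theorem should reduce to checking the hypotheses of the stationarity criterion for MMGOU processes established in \cite{BS_MMGOU}, specialized to our explicit driving MAP. First I would recall that general sufficient condition: it typically requires a strictly positive ergodic ``drift'' $\kappa_\xi$ of the modulating exponent $\xi$ under $\PP_\pi$ (so that $e^{-\xi_t}\to 0$ and the backward integral converges), together with a logarithmic integrability condition on the driving noise $\eta$ (or $K$) accumulated over one excursion of $J$, which guarantees that the limiting integral $V_\infty$ is almost surely finite. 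The strategy is to verify that our $\kappa_\xi$ and the integral condition \eqref{eq-statcondint} are exactly these two hypotheses written out for the present MAP.

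Concretely, I would compute the mean increment rate of $\xi$ under the stationary measure $\PP_\pi$. Using the path decomposition, the continuous part contributes $-\sum_j \pi_j \log(\delta(j))$; the Lévy jump part contributes, via the compensation formula for the jumps of $L$ in state $j$, the term $-\sum_j \pi_j \int_\RR \log(1+\tfrac{\lambda(j)}{\delta(j)}y^2)\,\nu_L(\dd y)$; and the regime-switch jumps contribute, via the rate $q_{jk}$ of switching from $j$ to $k$ and the jump law $F^{jk}$, the term $\sum_j \pi_j \sum_{k\neq j} q_{jk}\int x\,\dd F^{jk}(x,y)$. Summing these reproduces $\kappa_\xi$, and I would argue that $\kappa_\xi>0$ forces $\xi_t\to+\infty$ $\PP_\pi$-a.s. by the strong law of large numbers for MAPs (the additive component has stationary ergodic increments under $J$). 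The integrability condition \eqref{eq-statcondint} is precisely the one-excursion moment condition ensuring convergence of $\int_{(0,\infty)} e^{\xi_{s-}}\,\dd K_s$ after the change of variables linking $\eta$ and $K$; the supremum over $T_n\le\tau_1(j)$ of $e^{-\xi_{2,T_n}}\Delta\eta_{2,T_n}$ is the MAP analogue of the classical Goldie--Maller / Erickson-type condition for convergence of perpetuities.

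Granting these two verifications, the cited MMGOU stationarity theorem yields that the distributional limit $V_\infty=\text{d-}\lim_{t\to\infty}e^{-\xi_t}\int_{(0,t]}e^{\xi_{s-}}\,\dd K_s$ exists, is finite, and that initializing $V_0\overset{d}=V_\infty$ conditionally on $J_0$ makes $(V_t)_{t\geq 0}$ strictly stationary; this is the first assertion. For the second assertion, that $(G_t)_{t\geq 0}$ has stationary increments, I would argue directly from the MAP property. For $0\le s\le t$ the increment $G_{s+t}-G_s=\int_{(s,s+t]}\sqrt{V_{u-}}\,\dd L_u$ is a functional of the post-$s$ segment of $(V,L,J)$; once $(V,J)$ is stationary and $L$ has stationary independent increments independent of $J$, the defining MAP relation \eqref{MAPdefinition} together with the stationarity of $V$ gives that the law of $G_{s+t}-G_s$ under $\PP_\pi$ depends only on $t$, which is exactly stationarity of increments.

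The main obstacle I anticipate is the translation step between $((\xi,\eta),J)$ and $((U,K),J)$ together with the precise bookkeeping of the jump terms: one must ensure that $\kappa_\xi$ as written truly equals the ergodic drift appearing in the hypothesis of the \cite{BS_MMGOU} criterion (in particular that the regime-switch drift enters with the stated sign and with $x$, the first component of the jump, rather than $e^{-x}-1$), and that condition \eqref{eq-statcondint} matches the source's integrability requirement on $K$ after the nonlinear map $\Delta\eta_{2,s}\mapsto e^{-\Delta\xi_{2,s}}\Delta\eta_{2,s}$ recorded in \eqref{eq-ULMSCOGARCH}. Everything else is a direct invocation of the cited results, so the proof is essentially a careful verification rather than a new argument.
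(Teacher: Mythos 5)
Your high-level strategy coincides with the paper's: view $V$ as an MMGOU process and reduce everything to the stationarity criterion of \cite{BS_MMGOU} together with the convergence results for exponential functionals of MAPs, then verify that $\kappa_\xi$ and \eqref{eq-statcondint} are the specialized hypotheses. However, there is a genuine gap. The criterion in \cite[Thm.~3.3]{BS_MMGOU} is not of the clean form ``positive ergodic drift $+$ log-moment condition $\Rightarrow$ stationarity'' that you invoke; it is a dichotomy: a strictly stationary solution exists if and only if either (i) there exist constants $c_j$ with $V_t=c_{J_t}$ $\PP_\pi$-a.s.\ (a discrete solution), or (ii) the exponential functional of the \emph{time-reversed} MAP $((\xi^\ast,K^\ast),J^\ast)$ converges in $\PP_\pi^\ast$-probability, and the machinery identifying the stationary law with $V_\infty$ (and upgrading convergence in probability to a.s.\ convergence via \cite[Rem.~4.2]{BS_ExpFuncMAP}, which is what makes \cite[Props.~5.2 and 5.7.1]{BS_ExpFuncMAP} applicable) is tied to case (ii). The paper therefore spends roughly half of its proof ruling out case (i): inserting \eqref{eq-ULMSCOGARCH} into the characterization of discrete solutions from \cite[Prop.~4.7]{BS_ExpFuncMAP} forces $c_j=-\beta(j)/\log\delta(j)$ and then, since $\nu_L\not\equiv 0$, forces $\lambda(j)\equiv 0$ --- contradicting the hypothesis that $\lambda(j)>0$ for at least one $j$. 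Your proposal never uses that hypothesis at all, which is the telltale sign that this step is missing; without it the citation does not close the proof.

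A second, related imprecision: the object that converges almost surely is the time-reversed functional $-\int_{(0,t]}e^{\xi^\ast_{s-}}\,\dd K^\ast_s$, not the forward integral $\int_{(0,\infty)}e^{\xi_{s-}}\,\dd K_s$ you write (the latter diverges, since $\kappa_\xi>0$ makes $\xi_t\to+\infty$); the forward quantity $e^{-\xi_t}\int_{(0,t]}e^{\xi_{s-}}\,\dd K_s$ converges only in distribution. Consequently the hypotheses must be checked for the dual process and translated back by time reversal: one needs $\kappa_{-\xi^\ast}=\kappa_\xi$, the identification of the excursion condition under $\PP_j^\ast$ with \eqref{eq-statcondint} under $\PP_j$ (using $\Delta K_t=e^{-\Delta\xi_{2,t}}\Delta\eta_{2,t}$), and the observation that the normalizing function $\bar A_{-\xi^\ast}$ appearing in the integral test of \cite{BS_ExpFuncMAP} is constant in $x$ because each $\xi^{(j)}$ has positive drift $-\log\delta(j)$ and no positive jumps --- only then does the weighted log-moment condition collapse to the unweighted condition \eqref{eq-statcondint}. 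You correctly flagged this bookkeeping as an obstacle, but it is not optional fine print: together with the exclusion of the discrete case it constitutes the actual content of the paper's proof.
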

\begin{proof}
As shown in \cite[Thm. 3.3]{BS_MMGOU}, the MMGOU process $(V_t)_{t\geq 0}$ defined in \eqref{eq-MSCOGARCH} admits a strictly stationary solution if and only if either there exists a sequence $\{c_j,j\in S\}$ such that the resulting process is discrete with $V_t=c_{J_t}$ $\PP_\pi$-a.s. for all $t\geq 0$, or the exponential functional 
\begin{equation}\frace_{(-\xi^\ast, -K^\ast)}(t):= -\int_{(0,t]} e^{\xi^\ast_{s-}} \dd K^\ast_s \label{eq-fracedual} \end{equation}
converges in $\PP_\pi^\ast$-probability to some proper random variable as $t\to \infty$. Hereby, $((\xi^\ast,K^\ast), J^\ast)$ denotes the time-reversed MAP of $((\xi,K),J)$, i.e. a MAP such that for all $t\geq 0$
$$((\xi_{(t-s)-}-\xi_t, K_{(t-s)-}-K_t), J_{(t-s)-})_{0\leq s\leq t} \text{ under } \PP_\pi \text{ equals in law to } ((\xi_s^\ast, K_s^\ast), J_s^\ast)_{0\leq s\leq t} \text{ under } \PP_\pi^\ast,$$
and $\PP^\ast_j(\cdot):=\PP(\cdot|J_0^\ast=j)$;
see e.g. \cite[Appendix A.2]{DDK_Annals} for details.   \\
We start by showing that in our setting $(V_t)_{t\geq 0}$ can never be discrete: It follows from \cite[Prop.~4.7]{BS_ExpFuncMAP} (see also \cite[Eq. (3.8)]{BS_MMGOU} although that formula contains a wrong sign in front of the second integral) that the discrete solution $V_t=c_{J_t}$ is obtained if and only if $\PP_\pi$-a.s.
$$K_t=-\int_{(0,t]} c_{J_{s-}} \dd U_s + \int_{(0,t]} \dd c_{J_s}, \quad t\geq 0.$$
Inserting the given form of $((U,K),J)$ as presented in \eqref{eq-ULMSCOGARCH} this can be shown to yield $\PP_\pi$-a.s.
\begin{equation} \int_{(0,t]} \beta(J_s) \dd s + \int_{(0,t]} c_{J_s} \cdot \log(\delta(J_s)) \dd s + \sum_{0<s\leq t} c_{J_{s-}} \frac{\lambda(J_s)}{\delta(J_s)} (\Delta L_s)^2 = 0, \quad t\geq 0. \label{eq-conddiscrete} \end{equation}
Separating the continuous and the jump part of \eqref{eq-conddiscrete} we note that the continuous part vanishes if and only if $\beta(j)+ c_j \log(\delta(j))\equiv 0$. This implies $c_j=-\frac{\beta(j)}{\log(\delta(j))}>0$ for all $j$, and, as $\nu_L\not\equiv 0$,  a solution to \eqref{eq-conddiscrete} can only exist if  $\lambda(j)\equiv0$ which has been ruled out be assumption. Thus $(V_t)_{t\geq 0}$ can not admit a stationary discrete solution. \\
With this, as $S$ is chosen to be finite, we conclude by \cite[Rem. 4.2]{BS_ExpFuncMAP} that convergence of the functional \eqref{eq-fracedual} in $\PP_\pi^\ast$-probability is equivalent to $\PP_\pi^\ast$-a.s. convergence. Hence, summarizing the above, $(V_t)_{t\geq 0}$ admits a strictly stationary solution if and only if the functional \eqref{eq-fracedual} converges $\PP_\pi^\ast$-a.s. as $t\to\infty$, in which case its limit under $\PP_\pi^\ast$ is equal in law to the distributional limit $V_\infty$ as $t\to \infty$ of 
$$\mathfrak{F}_{(\xi,K)}(t) := e^{-\xi_t} \int_{(0,t]} e^{\xi_{s-}} dK_s$$
under $\PP_\pi$, see \cite[Lemma 3.1]{BS_ExpFuncMAP}. Further, as shown in \cite[Thm. 3.3]{BS_MMGOU}, in this case the stationary law of $(V_t)_{t\geq 0}$ is given by the law of $V_\infty$. \\
It remains to be checked that under our conditions the functional \eqref{eq-fracedual} converges $\PP_\pi$-a.s. as $t\to\infty$. By a combination of \cite[Props. 5.2 and 5.7.1]{BS_ExpFuncMAP} this convergence follows, if the long term mean $\kappa_{-\xi^\ast}$ of $-\xi^\ast$ is positive, and moreover, as $\eta$ and hence $K$ only jump at times of regime switches,
\begin{equation} \int_{(1,\infty)} \frac{\log q}{\bar{A}_{-\xi^\ast}(\log q)} \PP_j^\ast\Big(\sup_{0\leq T_n \leq \tau_1^\ast(j)} e^{\xi^\ast_{T_n-}} |\Delta K_{T_n}^\ast | \in \dd q\Big)<\infty, \label{eq-statcondhelp1}\end{equation}
with 
\begin{align}
	\bar{A}_{-\xi^\ast} (x)&:= \sum_{j\in S} \pi_j^\ast \bigg(\gamma_{-\xi^{(j),\ast}} + \nu_{-\xi^{(j),\ast}}((1,\infty)) + \int_1^x \nu_{-\xi^{(j),\ast}}((y,\infty)) \dd y + \sum_{i\in S\setminus\{j\}} q_{ij}^\ast \EE[(-\Delta \xi^{ij,\ast}_2)^+] \bigg) \label{eq-defAbar}\\
	&=  \sum_{j\in S} \pi_j \bigg(-\log(\delta(j)) + \sum_{i\in S\setminus\{j\}} q_{ji} \EE[(\Delta \xi^{ji}_2)^+] \bigg)>0 \nonumber
\end{align}
since $-\xi^{(j),\ast}$ is equal in law to $\xi^{(j)}$ which has drift $-\log(\delta(j))$ and no positive jumps by construction. In particular $\bar{A}_{-\xi^\ast} (x)$ does not depend on $x$, and therefore \eqref{eq-statcondhelp1} follows if
\begin{align*}&  \int_{(1,\infty)} \log q \,\PP_j^\ast\Big(\sup_{0\leq T_n \leq \tau_1^\ast(j)} e^{\xi^\ast_{T_n-}} |\Delta K_{T_n}^\ast | \in \dd q\Big)<\infty \\
\Leftrightarrow  &\int_{(1,\infty)} \log q \,\PP_j \Big(\sup_{0\leq T_n \leq \tau_1(j)} e^{-\xi_{T_n-}} |\Delta K_{T_n}| \in \dd q\Big)<\infty \end{align*}
and as $\Delta K_t = e^{-\Delta \xi_2,t} \Delta \eta_{2,t}$ the latter is equivalent to the claimed integral condition \eqref{eq-statcondint}.\\
To check positivity of the long term mean $\kappa_{-\xi^\ast}$ note that by its definition, cf. \cite[Eq. (3.6)]{BS_ExpFuncMAP}, 
\begin{align*}
	\kappa_{-\xi^\ast} &= \sum_{j\in S} \pi_j^\ast \bigg(\EE[-\xi^{(j),\ast}_1] + \sum_{i\in S\setminus\{j\}} q_{ji}^\ast \EE[-\Delta \xi^{ji, \ast}_2] \bigg) = \sum_{j\in S} \pi_j \bigg( \EE[\xi^{(j)}_1]  + \sum_{i\in S\setminus\{j\}} q_{ji} \EE[\Delta \xi^{ji}_2] \bigg) = \kappa_\xi,
\end{align*}
due to the given form of $\xi$. Thus positivity of $\kappa_{-\xi^\ast}$ follows by assumption.\\
Lastly, assuming strict stationarity of $V$, stationarity of the increments of $G$ is now an immediate consequence of the stationary increments of $L$.
\end{proof}

\begin{remark}\rm \label{rem-statCOGARCH}
		In the special case $|S|=1$ the condition $\kappa_\xi>0$ in Theorem \ref{thm-statMSCOGARCH} reduces to  
	\begin{equation} \label{eq-COGARCHstat} \int_\RR \log\Big(1+\frac{\lambda}{\delta} y^2 \Big)\nu_L(\dd y) <-\log \delta,\end{equation}
	which is just the condition that has been derived in  \cite[Thms. 3.1 and 3.2]{KLM2004} as necessary and sufficient for strict stationarity of the classical COGARCH volatility \eqref{eq-COGARCH}.\\
	For $|S|>1$ we note that $\kappa_\xi>0$ can be fulfilled even if for some regime states \eqref{eq-COGARCHstat} does not hold, i.e. the regime switching behavior can balance out short times of non-stationarity. Moreover, in the presence of additional jumps at times of regime switches, large jumps in the $\xi$-component can even improve stationarity as they increase $\kappa_\xi$. However, the presence of such shock jumps induces a possible dependence between $\xi$ and $\eta$, which then leads to the necessity of an additional integral condition as the one stated in \eqref{eq-statcondint}. This coincides with the well-known behavior of exponential functionals driven by bivariate Lévy processes as studied in \cite{EricksonMaller}. 
	\end{remark} 

\begin{remark}\rm
	If $\lambda(j)\equiv 0$ the MSCOGARCH volatility loses its dependence on the noise process $(L_t)_{t\geq 0}$, and consequently it is presumably not relevant for practical applications. In particular, $\lambda(j)\equiv 0$ implies that the driving process $(\xi,\eta)$ given $J$, and hence also the squared volatility process, is deterministic except of the jumps at regime switches.\\
	Still,  by an application of \cite[Thm. 3.3]{BS_MMGOU} and direct computations based on \cite[Eq. (3.8)]{BS_MMGOU}, under this condition, the squared MSCOGARCH volatility process admits a stationary solution if and only if the jumps $\left( 
		\Delta \xi_{2,s}^{ij} , \Delta \eta_{2,s}^{ij}  \right)$ at times of regime switches fulfill
		$$e^{-\Delta \xi_{2,\cdot}^{ij}} \left(\Delta \eta_{2,\cdot}^{ij} - \frac{\beta(i)}{\log \delta(i)} \right) + \frac{\beta(j)}{\log\delta(j)} = 0,$$
		and hence either of them is a function of the other. In this case the stationary distribution is discrete and the squared volatility process is given by 
		$$V_t= -\frac{\beta(J_t)}{\log \delta(J_t)}, \quad t\geq 0,$$
		which is a continuous-time Markov chain whenever the values $\frac{\beta(j)}{\log \delta(j)}$ for $j\in \mathcal{S}$ are pairwise different.
\end{remark}

\subsection*{Moments and autocorrelation structure}

In this section we study moments of the stationary squared MSCOGARCH volatility process and the corresponding price process. In order to formulate our results we need to introduce the \emph{matrix exponent} $\mathbf{\Psi}_\xi$ of the MAP $(\xi,J)$ given by 
\begin{equation}\label{eq_matrixexp}
	\mathbf{\Psi}_\xi(w)=\diag \left(\psi_j(w), j\in S\right)+\fatQ^\top\circ \left(\EE\left[e^{w \Delta \xi_2^{ij}}\right]\right)_{ij\in S}^\top,  
\end{equation}
for all $w\in\CC$ such that the right hand side exists, see also \cite[Prop. XI.2.2]{Asmussen_AppliedProbandQueues} or \cite{DDK_Annals}. 
Hereby and in the following, $\psi_j(w)=\log \EE[e^{w \xi^{(j)}_1}]$ is the Laplace exponent corresponding to the Lévy process $\xi^{(j)}$, which in the present situation reads as
\begin{equation}
	\label{eq_psixij}
	\psi_j(w) = -w \log \delta(j) + \int_\RR \Big(\Big(1+\frac{\lambda(j)}{\delta(j)} y^2 \Big)^{-w} - 1\Big)\nu_L(\dd y).
\end{equation} 
The following lemma provides a representation of certain values of the matrix exponent in terms of the model parameters. To shorten our notation in the upcoming results, we define the $N\times N$ matrices \begin{equation}\label{eq-defFmatrix} \mathbf{F}_{k,n}:= \left( \int_{\RR\times \RR_+} e^{-kx} y^n \dd F^{ij}(x,y) \right)_{i,j\in S}, \end{equation}
for all $k,n\in\NN_0$ such that the appearing integrals are finite.

\begin{lemma}
	For any $k\in \NN$ such that there exists $t>0$ with $\EE_\pi[\sup_{0<s\leq t} e^{-k\xi_s}]<\infty$, in the MSCOGARCH model it holds
	\begin{align*}
		\mathbf{\Psi}_\xi(-k) &= \diag\left(k\log (\delta(j)) + \int_\RR\Big( \Big(1+\frac{\lambda(j)}{\delta(j)} y^2\Big)^k - 1 \Big)\nu_L(\dd y), j\in S \right)  + \fatQ^\top \circ \mathbf{F}_{k,0}^\top.
	\end{align*}
\end{lemma}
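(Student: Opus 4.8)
The plan is to evaluate the defining formula \eqref{eq_matrixexp} for $\mathbf{\Psi}_\xi$ at the specific point $w=-k$ and to simplify its two summands separately using the explicit description of the MAP $(\xi,J)$. First I would justify that the right-hand side of \eqref{eq_matrixexp} is actually finite at $-k$, so that the evaluation is legitimate: the hypothesis $\EE_\pi[\sup_{0<s\leq t}e^{-k\xi_s}]<\infty$ implies $\EE_\pi[e^{-k\xi_t}]<\infty$, and since $J$ is ergodic on a finite state space we have $\pi_j>0$ for every $j$, whence $\EE_j[e^{-k\xi_t}]<\infty$ for each state. By the standard MAP theory (cf.\ \cite[Prop.~XI.2.2]{Asmussen_AppliedProbandQueues}, \cite{DDK_Annals}) this is exactly the condition under which $\mathbf{\Psi}_\xi$ exists at $-k$, equivalently under which every Laplace exponent $\psi_j(-k)$ and every jump-moment $\EE[e^{-k\Delta\xi_2^{ij}}]$ is finite.

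For the diagonal summand I would substitute $w=-k$ directly into the explicit Laplace exponent \eqref{eq_psixij}. Since $-(-k)\log\delta(j)=k\log\delta(j)$ and $(1+\frac{\lambda(j)}{\delta(j)}y^2)^{-(-k)}=(1+\frac{\lambda(j)}{\delta(j)}y^2)^{k}$, this yields $\psi_j(-k)=k\log\delta(j)+\int_\RR((1+\frac{\lambda(j)}{\delta(j)}y^2)^{k}-1)\,\nu_L(\dd y)$, which is precisely the $j$-th diagonal entry in the claimed expression.

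For the off-diagonal summand I would use that $\Delta\xi_2^{ij}$ is the first coordinate of the regime-switch jump $Z^{ij}=(\Delta\xi_2^{ij},\Delta\eta_2^{ij})$ with law $F^{ij}$ on $\RR\times\RR_+$. Hence $\EE[e^{-k\Delta\xi_2^{ij}}]=\int_{\RR\times\RR_+}e^{-kx}\,\dd F^{ij}(x,y)$, which is the $(i,j)$-entry of $\mathbf{F}_{k,0}$ from \eqref{eq-defFmatrix} (the case $n=0$, so that $y^n=1$). Therefore $(\EE[e^{-k\Delta\xi_2^{ij}}])_{ij\in S}^\top=\mathbf{F}_{k,0}^\top$, and adding the two summands gives the asserted identity $\mathbf{\Psi}_\xi(-k)=\diag(\psi_j(-k),j\in S)+\fatQ^\top\circ\mathbf{F}_{k,0}^\top$.

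The only step requiring genuine care is the finiteness argument of the first paragraph; all subsequent computations are direct substitutions into \eqref{eq_psixij} and \eqref{eq-defFmatrix}. I do not anticipate a real obstacle, since finiteness of $\psi_j(-k)$ amounts to integrability of $(1+\frac{\lambda(j)}{\delta(j)}y^2)^{k}-1$ against $\nu_L$ near infinity, i.e.\ a $2k$-th moment condition on the large jumps of $L$, which is exactly what the sup-moment hypothesis encodes through the path decomposition of $\xi$ recorded before Proposition \ref{prop_MarkovMSCOGARCH}.
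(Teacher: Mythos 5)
Your proof is correct and follows essentially the same route as the paper, which simply notes that the identity follows immediately by substituting $w=-k$ into \eqref{eq_matrixexp} and \eqref{eq_psixij} and recognizing $\EE[e^{-k\Delta\xi_2^{ij}}]=\int_{\RR\times\RR_+}e^{-kx}\,\dd F^{ij}(x,y)$ as the $(i,j)$-entry of $\mathbf{F}_{k,0}$. Your additional paragraph justifying finiteness of the right-hand side under the sup-moment hypothesis is sound and merely makes explicit what the paper leaves implicit.
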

\begin{proof}
	This follows immediately from \eqref{eq_matrixexp} and \eqref{eq_psixij}.
\end{proof}

\begin{proposition}\label{prop_meanvarianceMSCOGARCH}
	Assume that for all $i,j\in S$
\begin{equation}\label{eq-condmoments} \EE_j[|V_0|^k]<\infty,\, \EE[e^{k |\xi_1^{(j)}|}]<\infty, \int_{\RR\times \RR_+} e^{k |x|} \dd F^{ij}(x,y)<\infty ,  \text{ and } \int_{\RR\times \RR_+} |y|^{k} \dd F^{ij}(x,y) <\infty,\end{equation} 
for $k=1$, then for all $t\geq 0$ and $j\in S$
	\begin{align*}
		\EE_j[V_t]&= \bone^\top e^{\mathbf{\Psi}_\xi(-1) \cdot t}\, \mathbf{e}_j \cdot \EE_j[V_0] + \bone^\top \int_0^t e^{\mathbf{\Psi}_\xi(-1) \cdot (t-s)} \left( \diag(\mathbf{\beta}) + \fatQ^\top \circ \mathbf{F}_{1,1}^\top \right)e^{\fatQ^\top \cdot s} \mathbf{e}_j \dd s.
	\end{align*}
Furthermore, if \eqref{eq-condmoments} holds for $k=2$, then for all $0\leq s\leq t$ and $j\in S$
\begin{align*}
	\lefteqn{\cov_j(\mathbf{e}_{J_t} V_t, V_s)}\\ &= e^{\mathbf{\Psi}_\xi(-1) (t-s)} \cdot \cov(\mathbf{e}_{J_s} V_s, V_s)\\
	&\quad +  \left(\sum_{k=0}^\infty \frac{(t-s)^k}{k!} \sum_{i=1}^k (\mathbf{\Psi}_\xi(-1))^{i-1} \left( \diag(\mathbf{\beta}) + \fatQ^\top \circ \mathbf{F}_{1,1}^\top \right)(\fatQ^\top)^{k-i} \right) \cdot \cov_j(\mathbf{e}_{J_s}, V_s), \quad \text{with}\\
		\lefteqn{\cov_j(\mathbf{e}_{J_t}, V_s)} \\
		&= e^{\fatQ^\top (t-s)} \cov_j(\mathbf{e}_{J_s}, V_s),
\end{align*}
and  
$\cov_j(V_t, V_s) = \bone^\top \cov_j(e_{J_t} V_t, V_s) $
is decreasing exponentially if the maximal eigenvalue of $\mathbf{\Psi}_\xi(-1)$ is negative.
\end{proposition}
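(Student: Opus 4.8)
The plan is to reduce the whole statement to the vector-valued first-moment function $\mathbf{m}_j(t):=\EE_j[\mathbf{e}_{J_t}V_t]\in\RR^N$, whose $i$-th coordinate is $\EE_j[V_t\mathds{1}_{\{J_t=i\}}]$, so that $\EE_j[V_t]=\bone^\top\mathbf{m}_j(t)$ and, more generally, $V_t=\bone^\top\mathbf{e}_{J_t}V_t$. The structural fact I would establish first is that, by the Markov property of $((V,G),J)$ from Proposition~\ref{prop_MarkovMSCOGARCH} together with the stationarity of the increments of the MAP $((\xi,\eta),J)$, the conditional expectation of $\mathbf{e}_{J_t}V_t$ given $\cF_s$ is \emph{affine} in $V_s$: writing $C:=\diag(\mathbf{\beta})+\fatQ^\top\circ\mathbf{F}_{1,1}^\top$ and $B(\tau):=\int_0^\tau e^{\mathbf{\Psi}_\xi(-1)(\tau-u)}\,C\,e^{\fatQ^\top u}\,\dd u$, I expect that for all $0\le s\le t$
\begin{equation}\label{eq-affcond}
\EE_j[\mathbf{e}_{J_t}V_t\mid\cF_s]=e^{\mathbf{\Psi}_\xi(-1)(t-s)}\,\mathbf{e}_{J_s}\,V_s+B(t-s)\,\mathbf{e}_{J_s}.
\end{equation}
All remaining claims then follow by taking expectations and exploiting bilinearity of the covariance.

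To prove the mean formula I would set $s=0$ in \eqref{eq-affcond}; since $V_0$ is conditionally independent of the MAP given $J_0=j$, this gives $\mathbf{m}_j(t)=e^{\mathbf{\Psi}_\xi(-1)t}\mathbf{e}_j\,\EE_j[V_0]+B(t)\mathbf{e}_j$, and applying $\bone^\top$ yields exactly the claimed expression. Identity \eqref{eq-affcond} I would obtain either by invoking the general first-moment formula for MMGOU processes in \cite{BSdT} and specialising it via the preceding Lemma (which identifies $\mathbf{\Psi}_\xi(-1)$) and the matrices $\mathbf{F}_{k,n}$ from \eqref{eq-defFmatrix}, or self-containedly from the generator of $(V,J)$. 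For the latter: while $J\equiv i$ the drift-and-$L$-jump generator acts on $v\mapsto v$ as multiplication by the diagonal entry $\psi_i(-1)$ of $\mathbf{\Psi}_\xi(-1)$ plus the additive constant $\beta(i)$, whereas a switch $k\to i$ sends $V_{t-}$ to $e^{-x}(V_{t-}+y)$ with $(x,y)\sim F^{ki}$, which produces precisely the off-diagonal entries $q_{ki}(\mathbf{F}_{1,0})_{ki}$ of $\mathbf{\Psi}_\xi(-1)$ together with the forcing entries $q_{ki}(\mathbf{F}_{1,1})_{ki}$. Collecting terms gives the linear ODE $\tfrac{\dd}{\dd t}\mathbf{m}_j(t)=\mathbf{\Psi}_\xi(-1)\mathbf{m}_j(t)+C\,e^{\fatQ^\top t}\mathbf{e}_j$, and variation of constants reproduces $B(t)\mathbf{e}_j$ as its particular solution.

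For the covariances, fix $s\le t$, multiply \eqref{eq-affcond} by $V_s$, take $\EE_j$, and subtract $\EE_j[\mathbf{e}_{J_t}V_t]\,\EE_j[V_s]$; the tower property turns the resulting $V_s^2$- and $V_s$-terms into the time-$s$ covariances and gives the recursion
\begin{equation*}
\cov_j(\mathbf{e}_{J_t}V_t,V_s)=e^{\mathbf{\Psi}_\xi(-1)(t-s)}\cov_j(\mathbf{e}_{J_s}V_s,V_s)+B(t-s)\,\cov_j(\mathbf{e}_{J_s},V_s).
\end{equation*}
To recognise $B(t-s)$ as the stated double series I would expand both matrix exponentials inside $B(\tau)$ and evaluate the Beta integral $\int_0^\tau(\tau-u)^a u^b\,\dd u=\tau^{a+b+1}a!\,b!/(a+b+1)!$; re-indexing by $k=a+b+1$ and $i=a+1$ collapses the double sum to $\sum_{k\ge1}\tfrac{\tau^k}{k!}\sum_{i=1}^k(\mathbf{\Psi}_\xi(-1))^{i-1}C(\fatQ^\top)^{k-i}$, which is the claimed expression (the $k=0$ term being an empty inner sum). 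The formula for $\cov_j(\mathbf{e}_{J_t},V_s)$ is the same computation with the elementary identity $\EE_j[\mathbf{e}_{J_t}\mid\cF_s]=e^{\fatQ^\top(t-s)}\mathbf{e}_{J_s}$ in place of \eqref{eq-affcond}. Finally $\cov_j(V_t,V_s)=\bone^\top\cov_j(\mathbf{e}_{J_t}V_t,V_s)$ since $\bone^\top\mathbf{e}_{J_t}=1$; the first summand decays like $e^{\mathbf{\Psi}_\xi(-1)(t-s)}$, hence exponentially once the largest eigenvalue of $\mathbf{\Psi}_\xi(-1)$ is negative, while in the second summand the vector $\cov_j(\mathbf{e}_{J_s},V_s)$ is zero-sum (because $\bone^\top\mathbf{e}_{J_s}\equiv1$), so ergodicity of $J$ makes $e^{\fatQ^\top u}\cov_j(\mathbf{e}_{J_s},V_s)$ vanish exponentially and the convolution $B(t-s)\cov_j(\mathbf{e}_{J_s},V_s)$ decays as well.

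The hard part will be the rigorous justification of the affine identity \eqref{eq-affcond}: it rests on the finiteness granted by \eqref{eq-condmoments}, needed both to make all the expectations above well defined and to legitimise the interchange of limit and expectation in the generator/ODE derivation, and on the time-homogeneity of the restarted MMGOU, i.e. the fact that after time $s$ the process started from $V_s$ in state $J_s$ is again an MSCOGARCH with the same parameters. Once \eqref{eq-affcond} is secured, the variation-of-constants step and the Beta-integral identification of the series are routine.
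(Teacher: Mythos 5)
Your proposal is correct, but it takes a genuinely more self-contained route than the paper. The paper's proof is essentially a citation: it invokes \cite[Lemma 3.2]{BSdT} to upgrade \eqref{eq-condmoments} to $\EE_\pi[\sup_{0<s\leq t} e^{k|\xi_s|}]<\infty$, then applies the general mean and autocovariance formulas for MMGOU processes \cite[Thms.~4.4 and 4.5]{BSdT}, and finally inserts the special structure of $((U,K),J)$. You instead build everything on the single affine identity $\EE_j[\mathbf{e}_{J_t}V_t\mid\cF_s]=e^{\mathbf{\Psi}_\xi(-1)(t-s)}\mathbf{e}_{J_s}V_s+B(t-s)\mathbf{e}_{J_s}$, obtained from the restart property $V_t=e^{-(\xi_t-\xi_s)}V_s+\int_{(s,t]}e^{-(\xi_t-\xi_{u-})}\dd\eta_u$ and a generator/variation-of-constants computation, and then get the covariance recursion by pure bilinearity; your Beta-integral re-indexing also explains \emph{why} the double series in the statement arises, which the paper inherits opaquely from the form of \cite[Thm.~4.5]{BSdT}. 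Your treatment of the exponential-decay claim is actually more informative than the paper's: the observation that $\cov_j(\mathbf{e}_{J_s},V_s)$ is a zero-sum vector, so that $e^{\fatQ^\top u}$ acts on it with a spectral gap despite $\fatQ$ having eigenvalue $0$, is the correct (and necessary) reason the forcing term $B(t-s)\cov_j(\mathbf{e}_{J_s},V_s)$ decays. Two caveats. First, a small imprecision: the diagonal entry of $\mathbf{\Psi}_\xi(-1)$ is $\psi_i(-1)+q_{ii}$, not $\psi_i(-1)$; the missing $q_{ii}$ comes from the exit rate of state $i$ in your bookkeeping of the generator, and it does appear once you collect all terms, but as written your sentence conflates the Lévy part with the full diagonal. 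Second, the step you flag as ``the hard part'' --- legitimising the generator/ODE derivation and the interchange of differentiation and expectation under \eqref{eq-condmoments} --- is precisely what the paper outsources to \cite[Lemma 3.2]{BSdT}; in a genuinely self-contained write-up you would need a localisation or dominated-convergence argument there (domination by $\sup_{0<u\leq t}e^{k|\xi_u|}$ times moments of $V_0$ and of the jump distributions), so either supply that or cite the same lemma, in which case the mean formula portion of your argument coincides with the paper's.
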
 
\begin{proof}
	By \cite[Lemma 3.2]{BSdT} it follows from \eqref{eq-condmoments} that $\EE_\pi[\sup_{0<s\leq t} e^{k |\xi_s|}]<~\infty$. We may therefore apply \cite[Thms. 4.4 and 4.5]{BSdT} to derive the expectation and autocovariance function of the squared MSCOGARCH volatility process. The result now follows by inserting the special structure of the MAP $((U,K),J)$ and direct computation.
\end{proof}

As we are particularly interested in the stationary version of the MSCOGARCH volatility process, we also provide a recursion formula for the integer  moments of the stationary squared MSCOGARCH volatility.

\begin{proposition}\label{prop_momentsstat}
	Assume that for all $j\in S$ and a given $k\in\NN$
	\begin{enumerate}
		\item[(a)] $\psi_j(-k)<|q_{jj}|$ and
		\begin{align*}
			\max_{i\in S\setminus\{j\}}& \Big((|q_{ii}|-\psi_i(-k))^{-1} \sum_{\ell \in S\setminus\{i,j\}} q_{i\ell} \int_{\RR\times \RR_+} e^{-kx} \dd F^{i\ell}(x,y)\Big) <1, 
		\end{align*}
		\item[(b)] the maximal eigenvalue of $\mathbf{\Psi}_\xi(-k)$ is negative,
		\item[(c)] $$\int_{\RR\times \RR_+} e^{-k|x|} y^k \dd F^{ij}(x,y)<\infty \quad \text{for all }i,j\in S.$$ 
	\end{enumerate}
	Then the squared MSCOGARCH volatility process $(V_t)_{t\geq 0}$ has a stationary distribution $V_\infty$, and the $k$'th moment of  $V_\infty$ is given recursively as
	\begin{align*}
		\EE_\pi[V_\infty^k] = - \fatone^\top \cdot \mathbf{\Psi}_\xi(-k)^{-1}  \left( k \diag(\mathbf{\beta})\cdot  \fatpi \, \EE_\pi[V_\infty^{k-1}] + \sum_{n=1}^k {k\choose n} \Big(\fatQ^\top \circ \mathbf{F}_{k,n}^\top \Big) \cdot \fatpi \,\EE_\pi [V_\infty^{k-n}]\right),
	\end{align*}
	for $k\geq 1$ with starting value $\EE_\pi[V_\infty^0]=1.$
\end{proposition}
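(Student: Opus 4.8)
The plan is to realize the stationary $k$-th moment as a linear functional of a \emph{regime-resolved} moment vector that solves a triangular linear system whose top block is governed by the matrix exponent $\mathbf{\Psi}_\xi(-k)$ of the preceding lemma. Concretely, I would work with
$$\mathbf{w}_k := \big(\EE_\pi[V_\infty^k \mathds{1}_{\{J_\infty=j\}}]\big)_{j\in S}\in\RR^N, \qquad \EE_\pi[V_\infty^k]=\bone^\top\mathbf{w}_k,$$
and derive a recursion expressing $\mathbf{w}_k$ through $\mathbf{w}_0,\dots,\mathbf{w}_{k-1}$. First I would settle existence and finiteness: since the squared MSCOGARCH volatility is an MMGOU process, I invoke the moment theory of \cite{BSdT}. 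Condition (b), negativity of the maximal eigenvalue of $\mathbf{\Psi}_\xi(-k)$, is the matrix analogue of the classical contraction $\EE[e^{-k\xi_1}]<1$ and yields both finiteness of $\EE_\pi[V_\infty^k]$ and, crucially, invertibility of $\mathbf{\Psi}_\xi(-k)$; condition (a) is the sharpened one-sided bound guaranteeing $\EE_\pi[\sup_{0<s\leq t}e^{-k\xi_s}]<\infty$ (the per-regime/Markov-chain analogue of the estimate used via \cite[Lemma 3.2]{BSdT} in Proposition \ref{prop_meanvarianceMSCOGARCH}); and (c) ensures the regime-switch jumps contribute finite $k$-th moments, so that all matrices $\mathbf{F}_{k,n}$ are well defined. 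Together with Theorem \ref{thm-statMSCOGARCH} this secures the stationary law $V_\infty$ with $\EE_\pi[V_\infty^k]<\infty$.

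Next I would set up the stationary balance equation through the generator $\mathcal{A}$ of the Markov process $((V,G),J)$ of Proposition \ref{prop_MarkovMSCOGARCH} applied to test functions $v^k\mathds{1}_{\{j=m\}}$ (equivalently, one can bypass the explicit generator and specialize the general stationary-moment recursion of \cite{BSdT} to the MAP $((U,K),J)$ of \eqref{eq-ULMSCOGARCH}, exactly as in the proof of Proposition \ref{prop_meanvarianceMSCOGARCH}). Between regime switches $K$ is the pure drift $\beta(J_t)\,dt$ and $e^{-\xi^{(j)}}$ carries Laplace exponent $\psi_j$, so the within-regime part contributes $\psi_m(-k)$ on $v^k$ and $k\beta(m)$ on $v^{k-1}$. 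At a switch $i\to m$ the volatility transforms as $V_{t-}\mapsto e^{-x}(V_{t-}+y)$ with $(x,y)\sim F^{im}$ independent of the pre-jump state; expanding $(e^{-x}(V_{t-}+y))^k=e^{-kx}\sum_{n=0}^k\binom{k}{n}V_{t-}^{k-n}y^n$ and factoring the jump law out of the expectation produces exactly the matrices $\mathbf{F}_{k,n}$, and collecting the $n=0$ terms with the within-regime $v^k$ part reassembles the matrix exponent of the preceding lemma. Setting $\EE_\pi[\mathcal{A}(\cdot)]=0$ in stationarity then gives the vector identity
$$\mathbf{\Psi}_\xi(-k)\,\mathbf{w}_k = -\Big(k\,\diag(\mathbf{\beta})\,\mathbf{w}_{k-1}+\sum_{n=1}^k\binom{k}{n}\big(\fatQ^\top\circ\mathbf{F}_{k,n}^\top\big)\mathbf{w}_{k-n}\Big).$$
Inverting $\mathbf{\Psi}_\xi(-k)$ (legitimate by (b)) and applying $\bone^\top$ yields a recursion of the claimed shape, seeded by $\mathbf{w}_0=\fatpi$ since $V_\infty^0\equiv1$.

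The main obstacle is the closure of the lower-order terms. The clean route above keeps the full regime-resolved vectors $\mathbf{w}_{k-n}$, whereas the statement records them as $\fatpi\,\EE_\pi[V_\infty^{k-n}]$. For the seed this is exact, $\mathbf{w}_0=\fatpi=\fatpi\,\EE_\pi[V_\infty^0]$, which is precisely how $\fatpi$ enters and mirrors the collapse $e^{\fatQ^\top s}\fatpi=\fatpi$ exploited for $k=1$ in Proposition \ref{prop_meanvarianceMSCOGARCH}. Propagating the same $\pi$-weighting to the genuinely regime-dependent lower moments $\mathbf{w}_{k-n}$ with $k-n\geq1$ is the delicate step; I would expect to have to justify it carefully (it is immediate only when $V_\infty$ and $J_\infty$ decouple), or else retain the vector form and recover $\EE_\pi[V_\infty^k]=\bone^\top\mathbf{w}_k$. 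The remaining technical care concerns justifying the generator action and the interchange of expectation and jump integration, for which the moment bounds furnished by (a)–(c) are exactly what is required.
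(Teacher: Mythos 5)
Your route is, in substance, the paper's own: the paper's entire proof consists of specializing the stationary-moment recursion of \cite[Thm.~4.8 and Rem.~4.9]{BSdT} to the MAP $((U,K),J)$ of \eqref{eq-ULMSCOGARCH}, and your stationary balance equation $\EE_\pi[\mathcal{A}(v^k\mathds{1}_{\{j=m\}})]=0$ reconstructs exactly that recursion: within regime $m$ the dynamics contribute $\psi_m(-k)v^k+k\beta(m)v^{k-1}$, a switch $i\to m$ acts as $v\mapsto e^{-x}(v+y)$ with $(x,y)\sim F^{im}$ and produces the matrices $\mathbf{F}_{k,n}$, and the $n=0$ jump terms recombine with the within-regime part into $\mathbf{\Psi}_\xi(-k)$. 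Your reading of hypotheses (a)--(c) is also the intended one.

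However, the step you flag as delicate is not a defect of your argument but a genuine discrepancy with the printed statement, and your vector form is the correct one. Writing $\mathbf{w}_m:=\EE_\pi[V_0^m\mathbf{e}_{J_0}]$ for the regime-resolved stationary moments, the balance equation yields
\begin{align*}
\mathbf{w}_k=-\mathbf{\Psi}_\xi(-k)^{-1}\Big(k\diag(\mathbf{\beta})\,\mathbf{w}_{k-1}+\sum_{n=1}^k\binom{k}{n}\big(\fatQ^\top\circ\mathbf{F}_{k,n}^\top\big)\,\mathbf{w}_{k-n}\Big),
\qquad \EE_\pi[V_\infty^k]=\fatone^\top\mathbf{w}_k,
\end{align*}
and replacing $\mathbf{w}_{k-n}$ by $\fatpi\,\EE_\pi[V_\infty^{k-n}]$ is exact only for $n=k$ (hence for $k=1$), since for $1\le n<k$ it amounts to the decoupling $\EE_\pi[V_\infty^{k-n}\mathds{1}_{\{J_0=j\}}]=\pi_j\,\EE_\pi[V_\infty^{k-n}]$, which fails whenever the stationary volatility level depends on the regime; note also that $\fatone^\top\mathbf{\Psi}_\xi(-k)^{-1}$ is in general not proportional to $\fatone^\top$, so the scalar projection of the vector recursion does not close on itself. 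A concrete check: take $S=\{1,2\}$, $q_{12}=q_{21}=q$, $F^{ij}=\delta_{(0,0)}$, identical $(\delta,\lambda)$ in both regimes (so $\psi_1=\psi_2=\psi$ and $\mathbf{\Psi}_\xi(-k)=\psi(-k)\mathbf{I}+\fatQ^\top$ with $\mathbf{I}$ the identity; conditions (a)--(c) hold for $k=2$ as soon as $\psi(-2)<0<q$), but $\beta(1)\neq\beta(2)$. Since $\fatone^\top\fatQ^\top=\mathbf{0}^\top$, both $k=2$ formulas reduce to $-\tfrac{2}{\psi(-2)}\,\fatone^\top\diag(\mathbf{\beta})\,\mathbf{x}$, with $\mathbf{x}=\mathbf{w}_1=-\mathbf{\Psi}_\xi(-1)^{-1}\diag(\mathbf{\beta})\fatpi$ in the correct version and $\mathbf{x}=\fatpi\,\EE_\pi[V_\infty]$ in the printed one, and a direct computation gives
\begin{align*}
\fatone^\top\diag(\mathbf{\beta})\big(\mathbf{w}_1-\fatpi\,\EE_\pi[V_\infty]\big)
=\frac{(\beta(1)-\beta(2))^2}{4\,(2q-\psi(-1))}>0,
\end{align*}
so the two values of $\EE_\pi[V_\infty^2]$ differ by $(\beta(1)-\beta(2))^2\big(2|\psi(-2)|(2q-\psi(-1))\big)^{-1}$, which vanishes only as $q\to\infty$ or when $\beta(1)=\beta(2)$. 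Therefore do not attempt to justify the collapse: complete the proof in the vector form above (your argument already does this, with (a)--(c) supplying the moment and invertibility guarantees via \cite{BSdT}), and record that the proposition's display --- as well as the explicit product formula in the subsequent special-case discussion --- is exact only when the regime-resolved lower moments are proportional to $\fatpi$; keeping regime-resolved quantities is also consistent with the paper's own Proposition \ref{prop_meanvarianceMSCOGARCH}, whose covariance formulas retain vectors such as $\cov_j(\mathbf{e}_{J_s}V_s,V_s)$.
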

\begin{proof}
	This can be derived by direct computations from \cite[Thm. 4.8 and Rem. 4.9]{BSdT} taking the special structure of the MAPs \eqref{eq-xietaMMCOGARCH} and \eqref{eq-ULMSCOGARCH} into account. Note in particular that $U$ and $K$ have no Gaussian components and hence all quadratic variations of continuous parts of $U$ and $K$ in \cite[Rem. 4.9]{BSdT} vanish from the computations. Moreover, $U$ and $K$ can only jump simultaneously at time of regime switches which further simplifies the resulting formulas.
\end{proof}

\begin{remark}\rm 
	The conditions in Proposition \ref{prop_momentsstat} (a) are primarily needed to ensure finiteness of exponential moments of $\xi$. As one can see, the conditions are stronger in regime states that tend to be visited for long times, i.e. in states with small exit rates, while short term stays in other states can be balanced out.
\end{remark}

Lastly, we consider the increments of the price process 
$$G_t^{(r)}:= G_{t+r}-G_t$$
corresponding to log-returns of time periods of length $r>0$. As shown in the next proposition, these log-returns are uncorrelated on disjoint time intervals, but squared log-returns are in general correlated. This agrees with empirical findings as well as with the likewise behavior of the COGARCH model, cf. \cite[Prop. 5.1]{KLM2004}, and of discrete-time GARCH models, cf. \cite{FZ2}.

\begin{proposition}\label{prop_momentslogreturns}
 Assume that $L$ is a pure-jump Lévy process with $\EE[L_1]=0$ and $\EE[L_1^2]<\infty$, and that the conditions of Proposition \ref{prop_momentsstat} are fulfilled for $k=1$. Then for any $t\geq 0$ and $h\geq r >0$ it holds
 \begin{align*}
 	\EE_\pi \big[G_t^{(r)}\big]&=0,\\
 	\EE_\pi\Big[(G_t^{(r)})^2\Big] &= - \EE[L_1^2] \cdot r\cdot  \fatone^\top \mathbf{\Psi}_\xi(-1)^{-1}  \left( \diag(\mathbf{\beta})+ \Big(\fatQ^\top \circ \mathbf{F}_{1,1}^\top \Big)\right) \cdot \fatpi, \quad \text{and} \\
 	\cov_\pi\big(G_t^{(r)}, G_{t+h}^{(r)}\big) &= 0.
 \end{align*}
Further, assuming $\EE[L_1^4]<\infty$ and under the conditions of Proposition \ref{prop_momentsstat} for $k=2$, 
\begin{align*}
	\lefteqn{\cov_\pi\Big((G_t^{(r)})^2, (G_{t+h}^{(r)})^2 \Big)}\\ 
	 &=\EE[L_1^2]  \fatone^\top \mathbf{\Psi}_\xi(-1)^{-1} \Big( e^{h \mathbf{\Psi}_\xi(-1)} - e^{(h-r) \mathbf{\Psi}_\xi(-1)}\Big) \EE_\pi[G_r^2 V_r \mathbf{e}_{J_r}] \\
	& \quad + \EE[L_1^2] \cdot \fatone^\top \mathbf{\Psi}_\xi(-1)^{-1} \\
	& \qquad \cdot  \int_0^h \Big(e^{\mathbf{\Psi}_\xi(-1)(h-u)}- e^{\mathbf{\Psi}_\xi(-1)((h-u-r)\vee 0)}\Big) \Big(\diag(\mathbf{\beta}) + \fatQ^\top \circ \mathbf{F}_{1,1}^\top\Big) e^{\fatQ^\top u} \dd u\cdot 
	\EE_\pi[G_r^2 \mathbf{e}_{J_r}]\\
	&\quad  - \EE[L_1^2]^2\cdot r^2 \Big(\fatone^\top \mathbf{\Psi}_\xi(-1)^{-1} \Big( \diag(\mathbf{\beta}) + \Big(\fatQ^\top \circ \mathbf{F}_{1,1}^\top \Big)\Big) \fatpi \Big)^2.
\end{align*}
\end{proposition}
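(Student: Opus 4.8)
The plan is to establish each of the four formulas by reducing the statements about price-process increments to statements about the squared volatility process, for which Propositions \ref{prop_meanvarianceMSCOGARCH} and \ref{prop_momentsstat} already provide the needed moment and covariance expressions. The key structural fact I would exploit throughout is that, conditionally on $((\xi,\eta),J)$ (equivalently on the volatility path $(V_t)_{t\geq 0}$), the price increment $G_t^{(r)}=\int_{(t,t+r]}\sqrt{V_{u-}}\,\dd L_u$ is an integral against the independent, centered, square-integrable Lévy process $L$. Since $L$ is pure-jump with $\EE[L_1]=0$ and $\EE[L_1^2]<\infty$, the It\^o isometry for such stochastic integrals gives the two conditional identities $\EE[G_t^{(r)}\mid \mathcal{V}]=0$ and $\EE[(G_t^{(r)})^2\mid\mathcal{V}]=\EE[L_1^2]\int_{(t,t+r]}V_{u-}\,\dd u$, where $\mathcal{V}$ denotes the $\sigma$-field generated by the volatility path. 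These are the workhorses for everything that follows.

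First I would prove $\EE_\pi[G_t^{(r)}]=0$ by tower property from the conditional centering identity, and the vanishing of $\cov_\pi(G_t^{(r)},G_{t+h}^{(r)})$ by using that the two increments live on disjoint time intervals: conditionally on $\mathcal{V}$, the integrals against $L$ over $(t,t+r]$ and over $(t+h,t+h+r]$ are uncorrelated (indeed the noise increments are independent across disjoint intervals and centered), so the conditional cross-covariance is zero and hence the unconditional one vanishes after taking expectations. For the second moment $\EE_\pi[(G_t^{(r)})^2]$, I would take expectations in the conditional isometry to obtain $\EE[L_1^2]\int_{(t,t+r]}\EE_\pi[V_{u-}]\,\dd u$, and then invoke stationarity of $V$ under $\PP_\pi$ so that $\EE_\pi[V_{u-}]$ is constant in $u$ and equals $\EE_\pi[V_\infty]$. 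The closed form stated in the proposition is then just the $k=1$ stationary moment from Proposition \ref{prop_momentsstat}, namely $\EE_\pi[V_\infty]=-\fatone^\top\mathbf{\Psi}_\xi(-1)^{-1}\big(\diag(\boldsymbol\beta)+\fatQ^\top\circ\mathbf{F}_{1,1}^\top\big)\fatpi$, multiplied by $r\,\EE[L_1^2]$.

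The genuinely involved part is the fourth formula, the autocovariance of squared log-returns. Here I would again condition on $\mathcal{V}$ and combine the isometry with the independence of the noise across the disjoint intervals $(t,t+r]$ and $(t+h,t+h+r]$ (valid since $h\geq r$): because $L$ has independent centered increments, the conditional covariance of $(G_t^{(r)})^2$ and $(G_{t+h}^{(r)})^2$ factorizes as $\EE[L_1^2]^2\int_{(t,t+r]}\int_{(t+h,t+h+r]}\cov$-type terms in $V_{u-}$. Taking expectations reduces the whole quantity to an expression in the autocovariance function of $V$ together with the cross-quantities $\EE_\pi[G_r^2 V_r\mathbf{e}_{J_r}]$ and $\EE_\pi[G_r^2\mathbf{e}_{J_r}]$ that appear in the statement; I would invoke stationarity to shift all time arguments and then substitute the autocovariance recursions from Proposition \ref{prop_meanvarianceMSCOGARCH}, whose semigroup factors $e^{h\mathbf{\Psi}_\xi(-1)}$, $e^{(h-r)\mathbf{\Psi}_\xi(-1)}$ and the convolution integral against $e^{\fatQ^\top u}$ are precisely what produce the three displayed terms. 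The main obstacle is bookkeeping: carefully tracking which terms carry the matrix $\mathbf{e}_{J_r}$ (encoding the regime at the start of the second window) versus the scalar volatility, keeping the $(h-u-r)\vee 0$ truncation in the integrand correct when the backward-propagated time index crosses zero, and ensuring that the product of the two first-moment contributions is subtracted off to yield the centered covariance (the final $-\EE[L_1^2]^2 r^2(\cdots)^2$ term). I would organize this by first writing the conditional covariance as a double integral in $(u,v)\in(t,t+r]\times(t+h,t+h+r]$ of $\cov_\pi(V_{u-},V_{v-})$, then applying Proposition \ref{prop_meanvarianceMSCOGARCH} to each $\cov_\pi(\mathbf{e}_{J_v}V_v,V_u)$ before integrating, which keeps the matrix exponentials manageable and isolates the source of each displayed summand.
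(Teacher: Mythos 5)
Your approach breaks down at its very first step, and the failure is specific to the COGARCH structure. You claim that, conditionally on the volatility path $\mathcal{V}=\sigma((\xi,\eta),J)$, the increment $G_t^{(r)}=\int_{(t,t+r]}\sqrt{V_{u-}}\,\dd L_u$ is an integral against an \emph{independent}, centered L\'evy process. In the MSCOGARCH model this is false: the same process $L$ drives both price and volatility, since $\xi$ contains the term $\sum_{0<s\leq t}\log\big(1+\tfrac{\lambda(J_s)}{\delta(J_s)}(\Delta L_s)^2\big)$. Conditioning on the volatility path therefore reveals the jump times of $L$ and the squared jump sizes $(\Delta L_s)^2$; what remains random is essentially only the signs of the jumps, and these need not be conditionally centered. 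For instance, if $\nu_L$ is concentrated on $\{-1,2\}$ with masses $2\alpha$ and $\alpha$, then $\EE[L_1]=0$, yet given $(\Delta L_s)^2=1$ the jump equals $-1$ almost surely. Hence neither $\EE[G_t^{(r)}\mid\mathcal{V}]=0$ nor your conditional It\^o isometry holds, and everything built on them is unsupported. (Your scheme is the natural one for the MSBNS model, Proposition \ref{prop_momentslogreturnsBNS}, where the price noise $W$ \emph{is} independent of the volatility; the defining feature of (MS)COGARCH is precisely that it is not.) The conclusions of the first three formulas are nevertheless correct and can be rescued by the argument the paper uses, in analogy to \cite[Prop.~5.1]{KLM2004}: one needs only that $V_{u-}$ is \emph{predictable} and that $L$ is a centered $L^2$-martingale. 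The martingale property of $\int\sqrt{V_{u-}}\,\dd L_u$ gives the zero mean and the vanishing covariance over disjoint intervals, and the compensation formula gives, unconditionally, $\EE_\pi\big[(G_t^{(r)})^2\big]=\EE[L_1^2]\int_{(t,t+r]}\EE_\pi[V_{u-}]\,\dd u$, after which stationarity and Proposition \ref{prop_momentsstat} finish as you describe.

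For the squared-return autocovariance the gap is not merely one of justification: carried out consistently, your scheme yields the wrong answer. Under your (false) conditional independence, the conditional covariance of the two squared increments given $\mathcal{V}$ would vanish, and the law of total covariance would give $\cov_\pi\big((G_t^{(r)})^2,(G_{t+h}^{(r)})^2\big)=\EE[L_1^2]^2\iint\cov_\pi(V_u,V_v)\,\dd u\,\dd v$ --- the BNS-type formula, expressible through the volatility law alone. The stated formula is not of this form: the joint moments $\EE_\pi[G_r^2V_r\mathbf{e}_{J_r}]$ and $\EE_\pi[G_r^2\mathbf{e}_{J_r}]$ are irreducible here precisely because $G_r$ and $V_r$ jump together, so they cannot emerge from a double integral of $\cov_\pi(V_{u-},V_{v-})$; your own sketch is internally inconsistent on this point, invoking both the factorized double integral and the cross moments it cannot produce. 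The paper instead conditions on the past $\bar{\cF}_r$ (not on the whole path): it writes $\cov_\pi\big((G_t^{(r)})^2,(G_{t+h}^{(r)})^2\big)=\EE_\pi\big[G_r^2\,\EE[(G_h^{(r)})^2\mid\bar{\cF}_r]\big]-\big(\EE[L_1^2]\,\EE_\pi[V_\infty]\,r\big)^2$, applies the isometry only to the \emph{future} increments of $L$ (legitimate, since these are independent of $\bar{\cF}_r$), and then propagates $\EE[V_s\mid\bar{\cF}_r]$ forward as an affine function of $(V_r,\mathbf{e}_{J_r})$ via the Markov property, with coefficients $e^{\mathbf{\Psi}_\xi(-1)(s-r)}$ and the convolution integral against $e^{\fatQ^\top u}$. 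Multiplying this affine expression by $G_r^2$ inside $\EE_\pi$ is exactly what generates the three displayed terms; no bookkeeping within your conditioning-on-the-path framework can reproduce them.
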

\begin{proof}
	As the price process $(G_t)_{t\geq 0}$ is defined as an integral with respect to the Lévy process $(L_t)_{t\geq 0}$, the given values for $\EE_\pi [G_t^{(r)}]$, $\EE_\pi [(G_t^{(r)})^2 ]$, and $\cov_\pi(G_t^{(r)}, G_{t+h}^{(r)} )$ can be obtained in complete analogy to the computations in the proof of \cite[Prop. 5.1]{KLM2004}, using Proposition \ref{prop_momentsstat}.\\
	For the covariance of the squared increments, let $\bar{\FF}=(\bar{\cF}_t)_{t\geq 0}$ denote the natural filtration induced by $(L,J,\xi_2, \eta_2)$. Then, again in analogy to \cite[Proof of Prop. 5.1]{KLM2004}, conditioning yields
	\begin{align*}
		\cov_\pi\Big((G_t^{(r)})^2, (G_{t+h}^{(r)})^2 \Big) &= \EE_\pi \Big[G_r^2 \, \EE\big[(G_h^{(r)})^2 |\bar{\cF}_r \big]\Big] - \Big(\EE[L_1]^2 \EE_\pi[V_\infty] r \Big)^2\\
		&= \EE[L_1^2] \EE_\pi\Big[G_r^2 \cdot \int_{(h,h+r]} \EE[V_s|\bar{\cF}_r] \dd s\Big] - \Big(\EE[L_1^2]^2 \EE_\pi[V_\infty] r \Big)^2,
	\end{align*}
for $h\geq r\geq 0$, $t\geq 0$. Hereby
\begin{align*}
	\EE[V_s|\bar{\cF}_r] &= \EE_{J_r}[V_{s-r}] + \big(V_r-\EE[V_0]\big) \EE_{J_r}\big[e^{-\xi_{s-r}}|\bar{\cF}_0\big]\\
	&= \fatone^\top e^{\mathbf{\Psi}_\xi(-1) (s-r)} \mathbf{e}_{J_r} V_r + \fatone^\top \int_0^{s-r} e^{\mathbf{\Psi}_\xi(-1)(s-r-u)} \Big(\diag(\mathbf{\beta}) + \Big(\fatQ^\top \circ \mathbf{F}_{1,1}^\top \Big)\Big) e^{\fatQ^\top u} \mathbf{e}_{J_r} \dd u,
\end{align*}
where the first equality follows by computations similar to \cite[Proof of Prop. 5.1]{KLM2004}, while the second equality is derived from Prop. \ref{prop_momentsstat} and \cite[Thm. 4.8]{BSdT}. The stated formula now  follows by direct computation.
\end{proof}

\subsection*{A special case with no jumps in the $\eta$-component}

To simplify the structure of the MSCOGARCH process, we consider a squared MSCOGARCH volatility process under the assumption of no additional jumps in the second component $\eta$ at times of a regime switch. Hence we set $\Delta \eta_{2, \cdot} \equiv 0$ which implies that the two components  of the driving process $(\xi,\eta)$ are conditionally independent given $J$. This leads to significant simplifications of the above results:
\begin{itemize}
	\item The integral condition \eqref{eq-statcondint} is always fulfilled and therefore, by Theorem \ref{thm-statMSCOGARCH} (as long as $\lambda(j)\not\equiv 0$), $\kappa_\xi>0$ is a sufficient condition for stationarity of the squared MSCOGARCH volatility, and thus for stationary increments of the price process. In particular, according to Remark \ref{rem-statCOGARCH}, stationarity of the single COGARCH volatility regimes is sufficient, but not necessary, for stationarity of the MSCOGARCH volatility.
	\item The moment formulas in Propositions \ref{prop_meanvarianceMSCOGARCH} and \ref{prop_momentslogreturns} simplify substantially, as $\mathbf{F}_{k,n}= \mathbf{0}$ for all $n\neq 0$. In particular the recursion in Proposition \ref{prop_momentsstat} can be solved explicitely giving 
	\begin{align*}
		\EE_\pi[V_\infty^k] 
		&= k! \prod_{n=1}^k \left(- \fatone^\top \cdot \mathbf{\Psi}(-n)^{-1}  \cdot \diag(\mathbf{\beta})\cdot  \fatpi \right).
	\end{align*}
	This nicely generalizes the product formula for the moments of the squared COGARCH volatility as it has been obtained in \cite[Prop. 4.2]{KLM2004}.
\end{itemize}

A drawback of this much simpler model however lies in the fact that jumps of the squared volatility at times of a regime switch are necessarily scaled by the previous value of the volatility. More precisely, jumps at times of regime switches $\{T_n, n\in\NN\}$ are always of the form 
$$\Delta V_{T_n}= V_{T_n-} \Delta U_{T_n} = V_{T_n-} \left(e^{ -\Delta \xi_{2,{T_n}}} -1 \right).$$
This may be too restrictive in order to capture truly exogenous shocks in the model.

\section{A Barndorff-Nielsen Shephard model in a Markovian environment} \label{S-MSBNS}
\setcounter{theorem}{0} 

\subsection*{Definition of a BNS model in a Markovian environment}

Following \cite{BarndorffNielsen-Shephard}, the squared volatility process $V$ in the BNS model as given in \eqref{eq-BNS} is parametrized by the parameter $\lambda$ that describes the dynamic structure of the process, and the driving Lévy process $L$ that determines the stationary distribution of the volatility process. Hereby, $L$ is assumed to be a pure-jump subordinator.\\
Thus, in order to define a BNS volatility in a Markovian environment, let $(J_t)_{t\geq 0}$ be an ergodic Markov chain on $S=\{1,\ldots, N\}$, fix constants $\lambda(j)>0$, and let $(L_t^{(j)})_{t\geq 0}$ for $j\in S$ be independent pure-jump subordinators, independent of the Markov chain $J$. Further, define the auxiliary bivariate Lévy processes
\begin{equation}
	\label{eq_BNSxietaLevy}
	\begin{pmatrix}
		\xi_t^{(j)}\\ \eta_t^{(j)} 
	\end{pmatrix} := \begin{pmatrix}
	t \lambda(j) \\ L_{t \lambda(j)}^{(j)}
\end{pmatrix}, \quad t\geq 0, j=1,\ldots, N.
\end{equation}
By construction the processes $\{(\xi^{(j)}, \eta^{(j)}), j=1,\ldots, N\}$ are independent Lévy processes, and hence it follows from the path decomposition \eqref{MAPpathdescription} that 
\begin{equation} \label{eq-xietaBNSsimple} \Big( \Big(\int_{(0,t]} \dd \xi_s^{(J_s)}, \int_{(0,t]} \dd \eta^{(J_s)}_s\Big), J_t\Big)_{t\geq 0} \end{equation} 
is a MAP with respect to its augmented natural filtration. We could therefore consider the MMGOU process driven by \eqref{eq-xietaBNSsimple} as squared volatility process of a Markov-switching BNS model. However, as in the MSCOGARCH model, we aim to incorporate jumps in the volatility at times of a regime switch. This then leads to the following definition of a BNS model in a Markovian environment.

\begin{definition}\rm \label{defMSBNS}
Let $W=(W_t)_{t\geq 0}$ be a standard Brownian motion and let $J=(J_t)_{t\geq 0}$ be an ergodic Markov chain on $S=\{1,\ldots, N\}$ with stationary distribution $\fatpi$, independent of $W$. Let $\lambda(j)> 0, \mu(j), \beta(j),  \rho(j)\in\RR$, for $j\in S$, be constants. Let $(L_t^{(j)})_{t\geq 0}$ for $j\in S$ be independent pure-jump subordinators with Lévy measures $\nu_{L^{(j)}}$, independent of $J$ and $W$.\\
	Define the auxiliary bivariate Lévy processes $(\xi_{t}^{(j)}, \eta_{t}^{(j)})_{t\geq 0}$, $j\in S$, via \eqref{eq_BNSxietaLevy} and fix some jump distributions $F^{ij}$, $i,j\in S$, $i\neq j$, on $\RR_+^2$.\\
	Define the MAP $((\xi,\eta),J)$ by setting 
	\begin{align}\begin{pmatrix}
			\xi_t\\ \eta_t 
		\end{pmatrix} := \begin{pmatrix}
			\int_{(0,t]} \dd \xi_s^{(J_s)} \\ \int_{(0,t]} \dd \eta_s^{(J_s)}
		\end{pmatrix} + \sum_{n\geq 1}\sum_{\substack{i,j\in S,\\i\neq j}}Z_{n}^{ij} \mathds{1}_{\lbrace J_{T_{n-1}}=i,J_{T_{n}}=j,T_n\leq t\rbrace}, \label{eq-xietaMMBNS} \end{align} 
	for i.i.d. sequences $\{Z_n^{ij}, n\in\NN\}$ with distribution $F^{ij}$, independent of all other sources of randomness.
	For $t\geq 0$ set 
	\begin{equation} \label{eq-deftildeeta}  \tilde{\eta}_t := \eta_t - \fatone^\top \bigg(\diag\Big(\lambda(j) \EE\big[L_1^{(j)}\big], j\in S\Big) + \fatQ^\top \circ \Big(\int_{\RR_+\times \RR_+} y \dd F^{i,j}(x,y)\Big)_{i,j\in S}^\top \bigg) \int_0^t \mathbf{e}_{J_s} \dd s.\end{equation} 
 Then the \emph{Markov switching BNS (MSBNS) model} $(V,G)=(V_t,G_t)_{t\geq 0}$  consists of squared volatility and price process given by
	\begin{align} \label{eq-MSBNS}
		V_t&:= e^{-\xi_t} \bigg(V_0 + \int_{(0,t]} e^{\xi_{s-}} \dd \eta_s \bigg)\\
		G_t&:= \int_{(0,t]} (\mu(J_s) + \beta(J_s) V_s) \dd s + \int_{(0,t]} \sqrt{V_s} \dd W_s + \int_{(0,t]} \rho(J_{s-}) \dd \tilde{\eta}_s, \quad t\geq 0, \nonumber 
	\end{align} 
	for some random variable $V_0$ that is conditionally independent of $((\xi,\eta),J)$ given $J_0$.
\end{definition}

\begin{remark}\rm
	The process $\tilde{\eta}$ as defined in \eqref{eq-deftildeeta} is a martingale with respect to $\FF$ under any initial distribution of $J$, see \cite[Thm. 3.8]{BSdT}.
\end{remark}

\begin{figure}[h]
	\includegraphics[width=\textwidth]{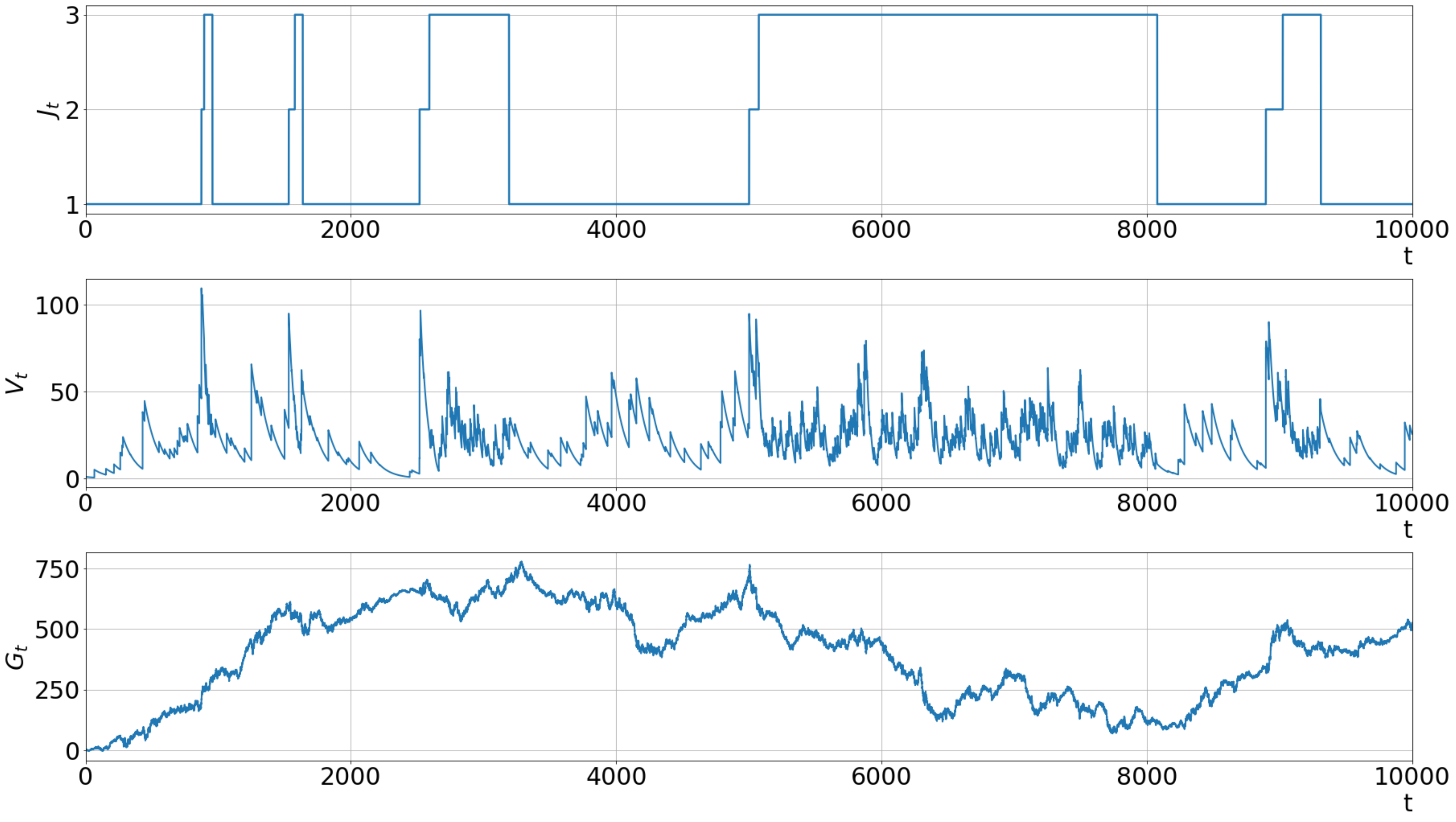}
	\caption{Simulated paths of a background driving Markov chain $(J_t)_{t\geq 0}$ with $|S|=3$ states (top), a resulting squared MSBNS volatility process (middle), and a resulting MSBNS price process driven by a standard Brownian motion (bottom). The simulated driving Lévy processes $(L_t^{(j)})_{t\geq 0}$ have been chosen as compound Poisson processes with intensities $(2,2,5)$ and exponentially distributed jumps with parameters $(0.1,0.1,0.2)$. The model parameters of the three MSBNS regimes are $\lambda=(0.01, 0.02, 0.04)$, $\mu = (0.1,0,0)$, and $\beta\equiv \rho \equiv 0$, such that all appearing BNS volatility processes are stationary and the volatility exhibits most jumps in regime $2$, fewest jumps in regime 1. The additional jumps at times of regime switches have been realized as exponentially distributed random variables.}
\end{figure}

In analogy to Proposition \ref{prop_MarkovMSCOGARCH} and using the notation for the jumps at regime switches as introduced in \eqref{eq_addjumpscomponents} we observe the following.

\begin{lemma} \label{lem_BNSSDE}
	The process $((V,G), J)$ defined in \eqref{eq-MSBNS} is a Markov process and adapted to $\FF$, the augmented natural filtration induced by $\{(L_{\lambda(j) t}^{(j)}), j\in S,J_t)$. Moreover, the MSBNS process $(V,G)$ satisfies the system of stochastic differential equations
	\begin{align*}
		\dd V_t&= V_{t-} \dd U_t + \dd K_t,\\
		\dd G_t&= (\mu(J_t) + \beta(J_t) V_t) \dd t + \sqrt{V_t} \dd W_t +\rho(J_{t-}) \dd \tilde{\eta}_t,  \quad t\geq 0,
	\end{align*}
	with $G_0:=0$ and for the $\FF$-MAP $((U,K), J)$ with additive component
		\begin{equation}
		\label{eq_BNSUK}
		\begin{pmatrix}
			U_t \\ K_t
		\end{pmatrix} := \begin{pmatrix}
			- \int_{(0,t]} \lambda(J_s) \dd s +\sum_{0<s\leq t} (e^{-\Delta \xi_{2,s}}-1) \\
			\int_{(0,t]} \dd \eta_{s}^{(J_s)} + \sum_{0<s\leq t} e^{-\Delta \xi_{2,s}} \Delta \eta_{2,s} 
		\end{pmatrix}, \quad t\geq 0.
	\end{equation}
\end{lemma}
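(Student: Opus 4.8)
The plan is to mirror the structure of the proof of Proposition \ref{prop_MarkovMSCOGARCH}, since Lemma \ref{lem_BNSSDE} is the exact BNS analogue of that result. First I would observe that by construction (via the path decomposition \eqref{MAPpathdescription} applied to the independent Lévy processes in \eqref{eq_BNSxietaLevy} together with the regime-switch jumps), the process $((\xi,\eta),J)$ defined in \eqref{eq-xietaMMBNS} is an $\FF$-MAP; this is essentially immediate here (unlike in the MSCOGARCH case where the $\xi^{(j)}$ were dependent) because the $(\xi^{(j)},\eta^{(j)})$ are genuinely independent Lévy processes, so the MAP property was already noted in the discussion preceding \eqref{eq-xietaBNSsimple}. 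With that in hand, the SDE $\dd V_t = V_{t-}\dd U_t + \dd K_t$ together with the claim that $((U,K),J)$ is a MAP follows directly from \cite[Prop.~2.11]{BS_MMGOU}, which transforms the driving pair $((\xi,\eta),J)$ of an MMGOU into the pair $((U,K),J)$ of its SDE. The only work is to verify that inserting the explicit $\xi$ from \eqref{eq_BNSxietaLevy}–\eqref{eq-xietaMMBNS} into the transformation formula of \cite[Prop.~2.11]{BS_MMGOU} produces exactly the $(U,K)$ displayed in \eqref{eq_BNSUK}.

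Carrying out that computation, I would note that $\xi$ has no jumps coming from the $\xi^{(j)}$ (these are the deterministic drifts $t\lambda(j)$), so its continuous part has rate $\lambda(J_s)$ and its only jumps are the regime-switch jumps $\Delta\xi_{2,s}$; the transformation $U$ therefore picks up the continuous drift $-\lambda(J_s)$ and, for the jumps, the standard GOU relation $\Delta U_s = e^{-\Delta\xi_s}-1$, which gives the $\sum (e^{-\Delta\xi_{2,s}}-1)$ term. For $K$, the relation $\dd K = \dd\eta - $ (jump-correction) again from \cite[Prop.~2.11]{BS_MMGOU} yields the continuous part $\int \dd\eta_s^{(J_s)} = \int \dd L^{(J_s)}_{\lambda(J_s)s}$ and the jump contributions $e^{-\Delta\xi_{2,s}}\Delta\eta_{2,s}$, matching \eqref{eq_BNSUK}. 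The SDE for the price process $\dd G_t = (\mu(J_t)+\beta(J_t)V_t)\dd t + \sqrt{V_t}\,\dd W_t + \rho(J_{t-})\dd\tilde\eta_t$ is simply the differential form of the defining integral \eqref{eq-MSBNS} and requires no argument beyond $G_0=0$.

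For the Markov property of $((V,G),J)$, I would repeat verbatim the conditioning argument from the proof of Proposition \ref{prop_MarkovMSCOGARCH}: for $0\le s\le t$, $B\in\cB(\RR_+\times\RR)$, $C\in\cP(S)$, one expands $V_t$ and $G_t$ in terms of their increments over $(s,t]$, writing $V_t = e^{-(\xi_t-\xi_s)}V_s + \int_{(s,t]} e^{-(\xi_t-\xi_{u-})}\dd\eta_u$ and $G_t = G_s + \int_{(s,t]}(\mu(J_u)+\beta(J_u)V_u)\dd u + \int_{(s,t]}\sqrt{V_u}\,\dd W_u + \int_{(s,t]}\rho(J_{u-})\dd\tilde\eta_u$, and then invokes the MAP property of $((\xi,\eta),J)$ in the form of \cite[Lemma~2.2]{BS_MMGOU} together with the independent increments of $W$ and the independence of $W$ from $(L^{(j)},J)$ to conclude that the conditional law given $\cF_s$ depends only on $(V_s,G_s,J_s)$. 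Adaptedness to $\FF$ is clear by construction.

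The main obstacle, though a mild one, is the correct bookkeeping of the $\tilde\eta$ term in the price equation and in the Markov argument: because $\tilde\eta$ in \eqref{eq-deftildeeta} subtracts a deterministic, $J$-dependent compensator, I must check that this compensator is an additive functional of $J$ over $(s,t]$ (it is, being $\int_s^t \mathbf{e}_{J_u}\dd u$ times constants) so that the increment $\tilde\eta_t-\tilde\eta_s$ still has the conditional-law structure required for the MAP/Markov argument. Everything else is a direct translation of the MSCOGARCH proof, with the Lévy-driven $\eta$ replacing the quadratic-variation-driven $\eta$ and the extra Brownian and drift terms in $G$ handled by independence of $W$.
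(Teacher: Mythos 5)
Your proposal is correct and follows exactly the route the paper intends: the paper gives no separate proof of this lemma, stating it merely ``in analogy to Proposition \ref{prop_MarkovMSCOGARCH}'', and your argument is precisely that analogy carried out in detail --- the MAP property of $((\xi,\eta),J)$ (immediate here from the path decomposition \eqref{MAPpathdescription} since the $(\xi^{(j)},\eta^{(j)})$ are independent), the application of \cite[Prop.~2.11]{BS_MMGOU} to obtain the SDE and the form \eqref{eq_BNSUK} of $(U,K)$, and the conditioning argument via \cite[Lemma~2.2]{BS_MMGOU} for the Markov property. Your explicit verification of \eqref{eq_BNSUK} (drift $-\lambda(J_s)$, jump relation $\Delta U = e^{-\Delta\xi_2}-1$, $\Delta K = e^{-\Delta\xi_2}\Delta\eta_2$) and your care with the compensator in $\tilde\eta$ and the independence of $W$ are exactly the ``direct computation'' the paper leaves implicit.
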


\begin{remark}\rm \label{rem-defMSBNS}	
	In the MSBNS model additional jumps of the driving processes at times of regime switches are only allowed to take values in $\RR_+^2$. While the restriction to $\RR_+$ in the second component ensures positivity of the squared volatility process (as in the MSCOGARCH model), the restriction in the first component ensures the squared volatility to always \emph{decrease} exponentially towards its mean, hence to avoid explosion. \\
	From the above Lemma \ref{lem_BNSSDE} we additionally see that $$\Delta V_{T_n} = e^{-\Delta \xi_{T_n}}(V_{T_n-} + \Delta \eta_{T_n}) - V_{T_n-}, \quad n\in\NN.$$
	Hence a jump at a regime switch in $\xi$ implies a drop of the volatility, while a jump in $\eta$ implies a sudden increase. 
\end{remark}

\subsection*{Stationarity of the MSBNS volatility process}

We proceed with establishing conditions for stationarity of the squared volatility process. Recall that the classic BNS volatility process as in \eqref{eq-BNS} is stationary if and only if the driving Lévy process has a finite log-moment. We will see a similar behavior in the Markov switching situation below in Theorem \ref{thm-statMSBNS}. Before, we present a lemma that deals with discrete solutions, i.e. with piecewise constant volatilities. 
Recall that we denote the first return time to $j$ by $\tau_1(j):=\inf\{t > 0: J_t=j, J_{t-}\neq j \}.$

\begin{lemma}\label{lem_MSBNSdegenerate}
There exists a sequence $(c_j)_{j=1,\ldots, N}$ of non-negative constants, such that  under $\PP_\pi$ the squared MSBNS volatility process is strictly stationary with $V_t=c_{J_t}$ $\PP_\pi$-a.s. if and only if
the Lévy processes $(L_t^{(j)})_{t\geq 0}$ are given by 
	\begin{equation} \label{eq-BNSdegenerateLevy} L_t^{(j)}= c_j \cdot t, \quad t\geq 0,\end{equation} 
	and the additional jumps at regime switches fulfill
	\begin{equation}\label{eq-BNSdegeneratejumps}
		e^{-\Delta \xi_{T_n}^{ij}} \big(\Delta \eta_{T_n}^{ij}+c_i \big)=c_j, \quad n\in\NN.
	\end{equation}
\end{lemma}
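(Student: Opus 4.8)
The plan is to characterize exactly when the squared MSBNS volatility degenerates to a piecewise-constant process $V_t=c_{J_t}$. I would work directly from the explicit MMGOU representation \eqref{eq-MSBNS}, namely $V_t=e^{-\xi_t}(V_0+\int_{(0,t]}e^{\xi_{s-}}\dd\eta_s)$, together with the SDE $\dd V_t=V_{t-}\dd U_t+\dd K_t$ from Lemma \ref{lem_BNSSDE}. The key structural fact, recorded in the proof of Theorem \ref{thm-statMSCOGARCH} via \cite[Prop.~4.7]{BS_ExpFuncMAP}, is that $V_t=c_{J_t}$ holds $\PP_\pi$-a.s. if and only if $K_t=-\int_{(0,t]}c_{J_{s-}}\dd U_s+\int_{(0,t]}\dd c_{J_s}$ for all $t\geq 0$, $\PP_\pi$-a.s. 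So the whole problem reduces to inserting the explicit form of $((U,K),J)$ from \eqref{eq_BNSUK} into this identity and separating it into its continuous (drift) part and its jump part.

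First I would substitute the representations $U_t=-\int_{(0,t]}\lambda(J_s)\dd s+\sum_{0<s\leq t}(e^{-\Delta\xi_{2,s}}-1)$ and $K_t=\int_{(0,t]}\dd\eta_s^{(J_s)}+\sum_{0<s\leq t}e^{-\Delta\xi_{2,s}}\Delta\eta_{2,s}$ into the degeneracy identity. Since $\int_{(0,t]}\dd c_{J_s}=\sum_{0<s\leq t}(c_{J_s}-c_{J_{s-}})$ is a pure-jump term supported on regime switches, and $\int_{(0,t]}c_{J_{s-}}\dd U_s$ splits into an absolutely continuous piece $-\int_{(0,t]}c_{J_{s-}}\lambda(J_s)\dd s$ plus jumps $\sum_{0<s\leq t}c_{J_{s-}}(e^{-\Delta\xi_{2,s}}-1)$ at switch times, I would collect terms of like type. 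The process $\eta^{(J_s)}$ has a continuous (drift) contribution from the subordinator $L^{(j)}$ together with its own jumps, which occur at times distinct (a.s.) from regime switches because $L$ and $J$ are independent.

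Separating the continuous part, the identity forces the drift of $\eta^{(J_s)}$ to cancel against $c_{J_s}\lambda(J_s)\dd s$; separating the jumps of $L^{(j)}$ (which land at non-switch times, where $\Delta c_{J}=0$ and $\Delta U=0$), the identity forces these jumps to vanish, so each $L^{(j)}$ can have no jump part. A pure-jump subordinator with no jumps is identically zero, but here $\eta^{(j)}=L^{(j)}_{\lambda(j)\cdot}$ may carry a drift; I would deduce that $L^{(j)}_t=c_j\,t$, which is \eqref{eq-BNSdegenerateLevy}, and simultaneously check the continuous balance $c_j\lambda(j)=\lambda(j)c_j$ holds automatically. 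Finally, isolating the jumps at a regime switch from $i$ to $j$ at time $T_n$, the contributions are $e^{-\Delta\xi_{T_n}^{ij}}\Delta\eta_{T_n}^{ij}$ from $K$, $c_i(e^{-\Delta\xi_{T_n}^{ij}}-1)$ from $-c_{J_{s-}}\Delta U$, and $c_j-c_i$ from $\Delta c_J$; setting their sum to zero and rearranging yields $e^{-\Delta\xi_{T_n}^{ij}}(\Delta\eta_{T_n}^{ij}+c_i)=c_j$, which is \eqref{eq-BNSdegeneratejumps}. For the converse I would simply reverse these computations, verifying that \eqref{eq-BNSdegenerateLevy} and \eqref{eq-BNSdegeneratejumps} make the degeneracy identity hold and hence (by \cite[Prop.~4.7]{BS_ExpFuncMAP}) produce $V_t=c_{J_t}$.

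The main obstacle I anticipate is the careful bookkeeping of the jump decomposition: one must rigorously justify that the jumps of the subordinators $L^{(j)}$ and the jumps induced by regime switches occur at disjoint time sets $\PP_\pi$-a.s. (using independence of $L$ and $J$), so that the single identity $K_t=-\int c_{J_{s-}}\dd U_s+\int\dd c_{J_s}$ can be split unambiguously into three separate conditions (continuous, subordinator-jump, switch-jump) that must each vanish independently. Matching this against \eqref{eq_BNSUK} and confirming that the $\Delta\xi$ and $\Delta\eta$ appearing there are exactly the components $\Delta\xi_{T_n}^{ij},\Delta\eta_{T_n}^{ij}$ of the regime-switch jump $Z_n^{ij}$ is the delicate step; the algebra leading to \eqref{eq-BNSdegeneratejumps} is then routine.
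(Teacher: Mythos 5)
Your proposal is correct and follows essentially the same route as the paper: the paper's proof is precisely an appeal to \cite[Prop.~4.7]{BS_ExpFuncMAP} (the degeneracy identity $K_t=-\int_{(0,t]}c_{J_{s-}}\dd U_s+\int_{(0,t]}\dd c_{J_s}$), followed by inserting the explicit MAP $((U,K),J)$ from \eqref{eq_BNSUK} and exploiting that $U$ jumps only at regime switches, which is exactly the continuous/jump separation you carry out. Your write-up merely makes explicit the bookkeeping (disjointness of subordinator jump times and switch times, and the algebra yielding \eqref{eq-BNSdegenerateLevy} and \eqref{eq-BNSdegeneratejumps}) that the paper compresses into a single sentence.
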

\begin{proof}
	This follows from \cite[Prop. 4.7]{BS_ExpFuncMAP} upon inserting the definition of the MAP $((U,K),J)$ and noticing that $U$ can only jump at times of regime switches.
\end{proof}

\begin{remark}\rm
	Observe that \eqref{eq-BNSdegeneratejumps} only allows for non-trivial solutions if both components of $(\xi,\eta)$ jump at times of a regime switch. Otherwise, e.g. if $\Delta \xi \equiv 0$, \eqref{eq-BNSdegeneratejumps} simplifies to 
	$$\Delta \eta_{2,\cdot}^{ij} = c_j-c_i, \quad \forall i,j\in S,$$
	and as $\Delta \eta_{2, \cdot}^{ij}\geq 0$ this implies $c_j\geq c_i$ for all pairings $i,j$, and hence $c_j=c_i\equiv c$ and $\Delta \eta_{2, \cdot}^{ij} \equiv 0$. Conversely, if $\Delta \eta_{2, \cdot}^{ij} \equiv 0$ the same argumentation again yields $c_j=c_i=c$ and therefore $\Delta \xi \equiv 0$. 	
\end{remark}

\begin{theorem}\label{thm-statMSBNS}
	Consider the squared MSBNS volatility process $(V_t)_{t\geq 0}$ defined in \eqref{eq-MSBNS} and assume that either \eqref{eq-BNSdegenerateLevy} or \eqref{eq-BNSdegeneratejumps} is not fulfilled. 
	If  
	\begin{equation}
		\int_{(1,\infty)} \log q \,\nu_{L^{(j)}} (\dd q) <\infty \;\text{ and }\; 	\int_{(1,\infty)} \log q \,\PP_j\bigg(\sup_{T_n\leq  \tau_1(j)} e^{-\xi_{2,T_n}} \Delta \eta_{2,T_n} \in \dd q\bigg)<\infty \label{eq-statcondintBNS}
	\end{equation}
	for all $j\in S$, and if one chooses
	$$V_0\overset{d}= V_\infty:= \text{d-}\lim_{t\to\infty} e^{-\xi_t} \int_{(0,t]} e^{\xi_{s-}} \dd K_s,$$
	conditionally independent of $L$ given $J_0$, then $(V_t)_{t\geq 0}$ is strictly stationary. In this case, the corresponding price process $(G_t)_{t\geq 0}$ has stationary increments conditional on $J$, i.e. 
	$$(G_{t+s}-G_t|J_t) \text{ has the same law as } G_s \text{ under } \PP_{J_t} \quad \text{for all $s,t\geq 0$}.$$
\end{theorem}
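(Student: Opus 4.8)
The plan is to follow the same route as in the proof of Theorem~\ref{thm-statMSCOGARCH}, using that the squared MSBNS volatility in \eqref{eq-MSBNS} is again an MMGOU process and hence subject to the stationarity criterion of \cite[Thm.~3.3]{BS_MMGOU}: a strictly stationary solution exists if and only if either there is a discrete solution $V_t=c_{J_t}$, or the time-reversed exponential functional $\frace_{(-\xi^\ast,-K^\ast)}(t)$ converges in $\PP_\pi^\ast$-probability; since $S$ is finite, the latter upgrades to $\PP_\pi^\ast$-a.s.\ convergence by \cite[Rem.~4.2]{BS_ExpFuncMAP}. First I would exclude the discrete alternative: by Lemma~\ref{lem_MSBNSdegenerate} a discrete solution exists precisely when both \eqref{eq-BNSdegenerateLevy} and \eqref{eq-BNSdegeneratejumps} hold, which is ruled out by the standing assumption of the theorem.

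The core is then to establish $\PP_\pi^\ast$-a.s.\ convergence of the functional via \cite[Props.~5.2 and 5.7.1]{BS_ExpFuncMAP}, which require (i)~positivity of the long-term mean $\kappa_{-\xi^\ast}$ of $-\xi^\ast$, and (ii)~a log-moment condition on the large contributions of $K^\ast$ over one return cycle. For (i) I would compute, exactly as in the MSCOGARCH proof, that $\kappa_{-\xi^\ast}=\kappa_\xi$; here the decisive simplification is that $\xi^{(j)}_t=t\lambda(j)$ is deterministic with positive slope and the regime-switch jumps $\Delta\xi_2^{ij}$ are non-negative, so that
\[
\kappa_\xi=\sum_{j\in S}\pi_j\Big(\lambda(j)+\sum_{i\in S\setminus\{j\}}q_{ji}\,\EE[\Delta\xi_2^{ji}]\Big)>0
\]
automatically, without any extra hypothesis---in contrast to the COGARCH case. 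For the same reason $-\xi^{(j),\ast}$ carries only a drift and no Lévy jumps, so the normalising quantity $\bar A_{-\xi^\ast}$ is constant in its argument and positive.

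The main obstacle is step~(ii). Unlike in the MSCOGARCH model, where $\eta$ and hence $K$ jumped only at regime switches, here $K$ in \eqref{eq_BNSUK} has two jump sources: the jumps of the subordinators $L^{(j)}$ driving $\eta$ between switches, and the exogenous jumps $e^{-\Delta\xi_{2}}\Delta\eta_2$ at switches. I would therefore split the partial functional over a cycle accordingly and bound its running supremum using $\log^+(a+b)\le\log 2+\log^+a+\log^+b$, so that the single Erickson--Maller integral splits into two finite pieces. The switch-jump piece is exactly the second integral in \eqref{eq-statcondintBNS}. The subordinator piece is controlled by the first integral: the condition $\int_{(1,\infty)}\log q\,\nu_{L^{(j)}}(\dd q)<\infty$ is equivalent to $\EE[\log^+L^{(j)}_t]<\infty$, and the delicate bookkeeping is to show that the accumulation of possibly many subordinator jumps over a return cycle remains log-integrable---here the a.s.\ finiteness and light tails of the return time $\tau_1^\ast(j)$ of the ergodic chain, together with the finiteness of $\bar A_{-\xi^\ast}$, keep $e^{\xi^\ast_{s-}}$ bounded over the cycle and the contribution integrable. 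This yields $\PP_\pi^\ast$-a.s.\ convergence, and \cite[Lemma~3.1]{BS_ExpFuncMAP} together with \cite[Thm.~3.3]{BS_MMGOU} identifies the stationary law with that of $V_\infty$, so the choice $V_0\overset{d}{=}V_\infty$ makes $(V_t)_{t\ge0}$ strictly stationary.

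Finally, for the price process I would deduce stationary increments conditional on $J$ from the Markov property of $((V,G),J)$ in Lemma~\ref{lem_BNSSDE} and strict stationarity of $(V,J)$. At stationarity the conditional law of $V_t$ given $J_t=j$ equals $\mathcal L(V_\infty\mid J_0=j)$ independently of $t$, and by the MAP property the post-$t$ evolution of $(V,G-G_t,J)$ given $J_t=j$ is, in law, a fresh start of the model from regime $j$; since the increments of $W$ are stationary and independent of $(J,\{L^{(j)}\})$, and the compensated driver $\tilde\eta$ inherits the conditional stationary-increment structure of the MAP, the increment $G_{t+s}-G_t$ given $J_t=j$ has the law of $G_s$ under $\PP_j$. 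The qualifier ``conditional on $J$'' is genuinely necessary because the coefficients $\mu(J),\beta(J),\rho(J)$ are regime-dependent, so the unconditional increment law varies with the current state.
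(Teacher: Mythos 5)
Your proposal is correct and follows essentially the same route as the paper: exclude the discrete solution via Lemma \ref{lem_MSBNSdegenerate}, invoke \cite[Props.~5.2 and 5.7.1]{BS_ExpFuncMAP} noting that $\kappa_\xi>0$ holds automatically (since $\lambda(j)>0$ and the switch jumps of $\xi$ are non-negative) and that $\bar A_{-\xi^\ast}$ is constant, and then reduce the log-moment condition to the two integrals in \eqref{eq-statcondintBNS} by separating the jumps of $K$ into subordinator jumps and regime-switch jumps. One small correction to your bookkeeping: the boundedness $e^{\xi^\ast_{s-}}\leq 1$ over a cycle is not a consequence of the light tails of $\tau_1^\ast(j)$ or the finiteness of $\bar A_{-\xi^\ast}$, but simply of the monotonicity $\xi_t\geq 0$ (deterministic positive drift plus non-negative jumps), which is exactly the fact the paper uses at this step.
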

\begin{proof}
	We follow the same strategy as in the proof of Theorem \ref{thm-statMSCOGARCH} upon noticing that a discrete stationary solution can not occur in the present situation due to Lemma \ref{lem_MSBNSdegenerate}. Thus, in analogy to the proof of Theorem \ref{thm-statMSCOGARCH} it remains to apply \cite[Props. 5.2 and 5.7.1]{BS_ExpFuncMAP}, where in the MSBNS model we note that
	$$	\kappa_\xi= \sum_{j\in S} \pi_j \bigg( \lambda(j)  + \sum_{k\in S\setminus\{j\}}  q_{jk} \int_{\RR_+^2} x \dd F^{jk}(x,y) \bigg)>0$$
	is always fulfilled as $\lambda(j)>0$ by definition. Moreover, in the MSBNS model we observe from the definition \eqref{eq-defAbar} that $\bar{A}_{-\xi^\ast} (x)= \kappa_\xi >0$,
	and in particular $\bar{A}_{-\xi^\ast} (x)$ does not depend on $x$. Thus the integral condition \cite[Eq. (5.10)]{BS_ExpFuncMAP} holds for the MSBNS model if and only if 
	\begin{align} \int_{(1,\infty)} \log q\, \PP_j^\ast  \Big(\sup_{0\leq t \leq \tau_1^\ast(j)} e^{\xi_{t-}^\ast} |\Delta(K_t^{\ast,b} + K_{2,t}^\ast) | \in \dd q\Big)&<\infty, \nonumber \\
	\Leftrightarrow  \int_{(1,\infty)} \log q\, \PP_j \Big(\sup_{0\leq t \leq \tau_1(j)} e^{-\xi_{t-}} |\Delta(K_t^{b} + K_{2,t}) | \in \dd q\Big)&<\infty \label{eq_condstathelp2} \end{align}
	where 
	$$K_t^b=\sum_{\substack{0<s\leq t\\ |\Delta K_s^{(J_s)}|\geq 1 }} \Delta K_s^{(J_s)} =  \sum_{\substack{0<s\leq t\\ |\Delta \eta_s^{(J_s)}|\geq 1 }} \Delta \eta_s^{(J_s)} = \sum_{\substack{0<s\leq t\\ |\Delta L_{s\lambda(J_s)}^{(J_s)}|\geq 1 }} \Delta L_{s\lambda(J_s)}^{(J_s)}.$$
	As $\xi_t>0$ for all $t\geq 0$, and as $\Delta K_{2,T_n} = e^{-\Delta \xi_{T_n}} \Delta \eta_{T_n}$ we hence see that  \eqref{eq_condstathelp2} follows from \eqref{eq-statcondintBNS}.	\\
	Stationarity of the increments of $G$ as stated is now an immediate consequence of \eqref{eq-MSBNS}.
\end{proof}

\begin{remark}\rm \,
	\begin{enumerate}
		\item 	In the special case $|S|=1$ the condition \eqref{eq-statcondintBNS} in Theorem \ref{thm-statMSBNS} reduces to  the well-known necessary and sufficient condition 
		$$\int_{(1,\infty)} \log q \, \nu_L(\dd q)<\infty$$
		for stationarity of a Lévy-driven OU process. 
		\item Note that due to the fact that we incorporated additional jumps at times of a regime switch in our model, the MSBNS volatility process considered in this paper does not fit into the setting of a standard regime-switching Lévy-driven OU process for which stationarity has e.g. been studied in \cite{LindskogMajumder}.
		\item For the classical BNS squared volatility process it has often been highlighted, that the class of possible stationary distributions coincides with the class of selfdecomposable distributions on $\RR_+$. In the Markov switching situation however, we can not expect stationary distributions of $V$ to be selfdecomposable. Hence, a much broader class of stationary distributions can be attained.
	\end{enumerate}

	\end{remark}
	
\subsection*{Moments and autocorrelation structure}

In order to express moment conditions and moments of the squared MSBNS volatility and price process, recall the matrix exponent $\mathbf{\Psi}_\xi$ from \eqref{eq_matrixexp}. In the current situation we see immediately from Definition \ref{defMSBNS} that
$$\mathbf{\Psi}_\xi(-k) = -k \diag(\lambda) + \fatQ^\top \circ \mathbf{F}_{k,0}^\top$$
is finite for all $k\in \NN$. Here, $\mathbf{F}_{k,0}$ is defined according to \eqref{eq-defFmatrix}, which in the MSBNS model specializes to 
\begin{equation*}
	 \mathbf{F}_{k,n}:= \left( \int_{\RR_+^2} e^{-kx} y^n \dd F^{ij}(x,y) \right)_{i,j\in S}. \end{equation*}

We now derive the following proposition along the lines of the proof of Proposition \ref{prop_meanvarianceMSCOGARCH}.
 
 \begin{proposition}\label{prop_meanvarianceMSBNS}
 	Assume that for all $i,j\in S$
 	\begin{equation}\label{eq-condmomentsBNS} \EE_j[|V_0|^k]<\infty,\, \EE[|L_1^{(j)}|^k]<\infty,\, \int_{\RR_+^2} e^{k |x|} \dd F^{ij}(x,y)<\infty,  \text{ and } \int_{\RR_+^2} |y|^{k} \dd F^{ij}(x,y) <\infty,\end{equation} 
 	for $k=1$, then for all $t\geq 0$ and $j\in S$
 	\begin{align*}
 		\EE_j[V_t]&= \bone^\top e^{\mathbf{\Psi}_\xi(-1) \cdot t}\, \mathbf{e}_j \cdot \EE_j[V_0]\\
 		&\qquad  + \bone^\top \int_0^t e^{\mathbf{\Psi}_\xi(-1) \cdot (t-s)} \left( \diag\Big(\lambda(i) \EE[L_1^{(i)}], i\in S\Big) + \fatQ^\top \circ \mathbf{F}_{1,1}^\top \right)e^{\fatQ^\top \cdot s} \mathbf{e}_j \dd s.
 	\end{align*}
 	Furthermore, if \eqref{eq-condmomentsBNS} holds for $k=2$, then for all $0\leq s\leq t$ and $j\in S$
	\begin{align*}
 		\lefteqn{\cov_j(\mathbf{e}_{J_t} V_t, V_s)}\\ &= e^{\mathbf{\Psi}_\xi(-1) (t-s)} \cdot \cov(\mathbf{e}_{J_s} V_s, V_s)\\
 		& +  \left(\sum_{\ell=0}^\infty \frac{(t-s)^\ell}{\ell!} \sum_{k=1}^\ell (\mathbf{\Psi}_\xi(-1))^{k-1} \left( \diag\Big(\lambda(i) \EE[L_1^{(i)}],i\in S\Big) + \fatQ^\top \circ \mathbf{F}_{1,1}^\top \right)(\fatQ^\top)^{\ell-k} \right) \cdot \cov_j(\mathbf{e}_{J_s}, V_s), \end{align*}
 	with
 	$$\cov_j(\mathbf{e}_{J_t}, V_s)  = e^{\fatQ^\top (t-s)} \cov_j(\mathbf{e}_{J_s}, V_s).$$
In particular  
 	$\cov_j(V_t, V_s) = \bone^\top \cov_j(e_{J_t} V_t, V_s) $
 	is decreasing exponentially if the maximal eigenvalue of $\mathbf{\Psi}_\xi(-1)$ is negative.
 \end{proposition}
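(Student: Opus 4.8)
The plan is to mirror the proof of Proposition~\ref{prop_meanvarianceMSCOGARCH}: first secure the integrability hypotheses required by the general moment results for MMGOU processes in \cite{BSdT}, and then specialise those results to the MSBNS driving MAP \eqref{eq_BNSUK}. Concretely, I would first check that the moment conditions \eqref{eq-condmomentsBNS} for a given $k$ imply $\EE_\pi[\sup_{0<s\leq t} e^{k|\xi_s|}]<\infty$ for every $t\geq 0$, which is the standing hypothesis of \cite[Thms. 4.4 and 4.5]{BSdT}. This is the one point where the MSBNS structure genuinely differs from the MSCOGARCH case: here $\xi^{(j)}_t=t\lambda(j)$ is deterministic, so $\EE[e^{k|\xi_1^{(j)}|}]=e^{k\lambda(j)}<\infty$ holds trivially and no exponential-moment assumption on the $L^{(j)}$ is needed; the only randomness in $\xi$ stems from the regime-switch jumps $\Delta\xi_2\geq 0$, whose contribution to the supremum is controlled precisely by $\int_{\RR_+^2} e^{k|x|}\,\dd F^{ij}(x,y)<\infty$. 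With this, \cite[Lemma 3.2]{BSdT} applies and delivers the required sup-moment bound.

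Having the integrability in hand, I would apply \cite[Thm. 4.4]{BSdT} for the first moment and \cite[Thm. 4.5]{BSdT} for the autocovariance, which express $\EE_j[V_t]$ and $\cov_j(\mathbf{e}_{J_t}V_t,V_s)$ through the matrix exponent $\mathbf{\Psi}_\xi$ governing the decay factor $e^{-\xi_t}$ together with the expected-increment structure of the forcing term $K$. The core of the proof is then to read off these two ingredients from \eqref{eq_BNSUK}. While $J=j$, the continuous part of $K$ advances at rate $\lambda(j)\EE[L_1^{(j)}]$, since $\eta^{(j)}_t=L^{(j)}_{t\lambda(j)}$ has mean $t\lambda(j)\EE[L_1^{(j)}]$; this produces the diagonal matrix $\diag(\lambda(i)\EE[L_1^{(i)}],i\in S)$. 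The jumps of $K$ occur only at regime switches $i\to j$, at rate $q_{ij}$ and with expectation $\EE[e^{-\Delta\xi_2}\Delta\eta_2]=\int_{\RR_+^2} e^{-x}y\,\dd F^{ij}(x,y)=(\mathbf{F}_{1,1})_{ij}$, producing the term $\fatQ^\top\circ\mathbf{F}_{1,1}^\top$ after accounting for the row/column conventions. Together these give the forcing matrix $\diag(\lambda(i)\EE[L_1^{(i)}],i\in S)+\fatQ^\top\circ\mathbf{F}_{1,1}^\top$ appearing in both displayed formulas, while $\mathbf{\Psi}_\xi(-1)=-\diag(\lambda)+\fatQ^\top\circ\mathbf{F}_{1,0}^\top$ supplies the exponent.

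For the autocovariance I would note, exactly as in Proposition~\ref{prop_meanvarianceMSCOGARCH}, that $U$ and $K$ carry no Gaussian component and can jump only simultaneously at regime switches, so that the general formula of \cite[Thm. 4.5]{BSdT} collapses to a forward propagation of the cross-moments $\cov_j(\mathbf{e}_{J_s}V_s,V_s)$ and $\cov_j(\mathbf{e}_{J_s},V_s)$. Because $V_t$ is linear in $V_s$ given $\cF_s$ through the factor $e^{-(\xi_t-\xi_s)}$, the conditional mean $\EE[\mathbf{e}_{J_t}V_t\mid\cF_s]$ evolves by the same first-order dynamics that drive $\EE_j[V_t]$, which is why only $\mathbf{\Psi}_\xi(-1)$ (and not $\mathbf{\Psi}_\xi(-2)$) enters; the condition \eqref{eq-condmomentsBNS} for $k=2$ is needed solely to guarantee $\EE_\pi[V_s^2]<\infty$ so that these products are integrable. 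The recursion $\cov_j(\mathbf{e}_{J_t},V_s)=e^{\fatQ^\top(t-s)}\cov_j(\mathbf{e}_{J_s},V_s)$ is immediate from the generator $\fatQ^\top$ of $J$, and expanding $e^{(t-s)\mathbf{\Psi}_\xi(-1)}$ and $e^{(t-s)\fatQ^\top}$ as power series and collecting terms yields the stated double sum. Finally, exponential decay of $\cov_j(V_t,V_s)=\bone^\top\cov_j(\mathbf{e}_{J_t}V_t,V_s)$ follows because a negative maximal eigenvalue of $\mathbf{\Psi}_\xi(-1)$ forces $\|e^{(t-s)\mathbf{\Psi}_\xi(-1)}\|\to 0$ exponentially as $t-s\to\infty$.

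I expect the main obstacle to be the bookkeeping in the second step: cleanly separating the drift and jump contributions of $K$ into the diagonal and off-diagonal pieces of the forcing matrix while tracking the transposes built into $\fatQ^\top\circ\mathbf{F}_{1,1}^\top$, and verifying that the Brownian motion $W$, which enters only the price equation and not $V$, drops out of every volatility moment. The integrability verification in the first step is, by contrast, essentially routine once the deterministic nature of $\xi^{(j)}$ is exploited.
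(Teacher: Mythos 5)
Your proposal is correct and follows essentially the same route as the paper: the paper's proof of this proposition is literally the one-line remark that it is derived "along the lines of" Proposition \ref{prop_meanvarianceMSCOGARCH}, whose proof is exactly your scheme --- \cite[Lemma 3.2]{BSdT} for the bound $\EE_\pi[\sup_{0<s\leq t}e^{k|\xi_s|}]<\infty$ (trivial here since $\xi^{(j)}_t=t\lambda(j)$ is deterministic, as you note), then \cite[Thms. 4.4 and 4.5]{BSdT} specialized to the MAP \eqref{eq_BNSUK} by direct computation. One harmless slip worth flagging: your description of the within-regime part of $K$ as a ``continuous part'' with ``jumps only at regime switches'' is pathwise false, since $\eta^{(j)}_t=L^{(j)}_{\lambda(j)t}$ is a pure-jump subordinator, but this does not affect any formula because only the compensator rate $\lambda(j)\EE[L_1^{(j)}]$ (and the joint regime-switch jump moments $\mathbf{F}_{1,1}$) enter the first-moment dynamics on which both the mean and the autocovariance identities rest.
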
 
 
 We next provide a recursion formula for the integer  moments of the stationary squared volatility. As in Proposition \ref{prop_momentsstat} this formula follows by direct computations from \cite[Thm. 4.8 and Rem. 4.9]{BSdT} upon noticing that $U$ and $K$ have no Gaussian components and only jump simultaneously at time of regime switches.

	\begin{proposition}\label{prop_momentsstatBNS}
	For some $k\in \NN$ and all $j\in S$ assume that
	\begin{enumerate}
		\item[(a)] $\EE[|L_1^{(j)}|^k]<\infty$, 
		\item[(b)] \begin{align}
			\max_{i\in S\setminus\{j\}} \Big((|q_{ii}|+ k \lambda(i) )^{-1} \sum_{\ell \in S\setminus\{i,j\}} q_{i\ell} \int_{\RR_+^2} e^{-kx} \dd F^{i\ell}(x,y)\Big) <1, \label{eq_condpropmomentsstatBNS}
		\end{align}
	\item[(c)] the maximal eigenvalue of $\mathbf{\Psi}_\xi(-k)$ is negative, and
	\item[(d)] all entries of $\mathbf{F}_{k,k}$ are finite.
	\end{enumerate} 
			Then the squared MSBNS volatility process $(V_t)_{t\geq 0}$ has a stationary distribution $V_\infty$, and the $k$'th moment of  $V_\infty$ is given recursively as
		\begin{align*}
			\lefteqn{\EE_\pi[V_\infty^k]}\\
			& = - \fatone^\top \cdot \mathbf{\Psi}_\xi(-k)^{-1}  \cdot  \sum_{n=1}^k {k\choose n} \Big(\diag\Big(\lambda(j) \int_{(1,\infty)} x^n \nu_{L^{(j)}}(\dd x), j\in S \Big) + \fatQ^\top \circ \mathbf{F}_{k,n}^\top \Big) \cdot \fatpi \EE_\pi[V_\infty^{k-n}] ,
		\end{align*}
		with starting value $\EE_\pi[V_\infty^0]=1.$
	\end{proposition}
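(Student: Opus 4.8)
The plan is to reduce Proposition \ref{prop_momentsstatBNS} to the general moment recursion for stationary MMGOU processes established in \cite[Thm.~4.8 and Rem.~4.9]{BSdT}, exactly as was done for the MSCOGARCH case in Proposition \ref{prop_momentsstat}. First I would verify that conditions (a)--(d) guarantee the hypotheses under which \cite[Thm.~4.8]{BSdT} applies, namely the existence of the stationary distribution $V_\infty$ together with finiteness of its $k$-th moment. The existence of a stationary law follows from Theorem \ref{thm-statMSBNS} (the discrete/degenerate case being excluded precisely by the standing assumption that \eqref{eq-BNSdegenerateLevy} or \eqref{eq-BNSdegeneratejumps} fails), while the finiteness of exponential moments of $\xi$ and the requisite integrability of the jumps at regime switches are encoded in (a)--(d). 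The role of condition (b), i.e.\ \eqref{eq_condpropmomentsstatBNS}, is to control the exponential moments of $\xi$ along excursions between regime switches, in direct analogy to Proposition \ref{prop_momentsstat}(a); condition (c) guarantees that $\mathbf{\Psi}_\xi(-k)$ is invertible with all eigenvalues in the open left half plane, so that the inverse $\mathbf{\Psi}_\xi(-k)^{-1}$ appearing in the recursion is well defined and the stationary moment is finite.

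Next I would insert the explicit structure of the MAP $((U,K),J)$ from \eqref{eq_BNSUK} into the abstract recursion. The key simplifications to record are: $U$ and $K$ carry no Gaussian components, so every term in \cite[Rem.~4.9]{BSdT} involving the quadratic variation of a continuous martingale part drops out; and $U$ jumps \emph{only} at regime switches, so the only simultaneous jumps of $U$ and $K$ occur at the times $\{T_n\}$ and are governed by the jump laws $F^{ij}$. The continuous part of $K$ is $\int_{(0,t]} \dd\eta_s^{(J_s)}$, whose moments are produced by the subordinators $L^{(j)}$; this is where the term $\diag(\lambda(j)\int_{(1,\infty)} x^n\,\nu_{L^{(j)}}(\dd x), j\in S)$ enters, since for a pure-jump subordinator the $n$-th moment of the increment over a unit-rate clock is expressed through $\int x^n\,\nu(\dd x)$ and the time change by $\lambda(j)$ contributes the prefactor. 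The jump contributions at regime switches assemble into $\fatQ^\top\circ\mathbf{F}_{k,n}^\top$ via the intensity matrix weighting together with the definition \eqref{eq-defFmatrix} of $\mathbf{F}_{k,n}$; here one uses $\Delta K_{2,T_n}=e^{-\Delta\xi_{T_n}}\Delta\eta_{T_n}$ so that raising to the relevant powers and integrating against $F^{ij}$ reproduces precisely the $e^{-kx}y^n$ kernel. The binomial coefficients $\binom{k}{n}$ arise from expanding $(V_{T_n-}+\Delta\eta_{T_n})^k$-type terms, i.e.\ from the binomial expansion of the jump in $V^k$ across a regime switch, combined with the stationarity identity that lets one replace the pre-jump moment by $\EE_\pi[V_\infty^{k-n}]$.

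The final step is purely algebraic: collecting the continuous and jump contributions, one obtains a linear equation of the form $\mathbf{\Psi}_\xi(-k)\cdot(\text{moment vector}) = -(\text{lower-order source term})$, which upon left-multiplication by $\fatone^\top\mathbf{\Psi}_\xi(-k)^{-1}$ and projection onto the stationary vector $\fatpi$ yields the stated recursion, with the base case $\EE_\pi[V_\infty^0]=1$ being immediate. I expect the main obstacle to be the bookkeeping in translating \cite[Rem.~4.9]{BSdT} faithfully: one must correctly identify which terms of the general formula survive given that $K$ has both a continuous (subordinator-driven) part and a regime-switch jump part, and ensure the time-change by $\lambda(j)$ in \eqref{eq_BNSxietaLevy} is propagated into the moment expressions so that the subordinator moments appear as $\lambda(j)\int_{(1,\infty)} x^n\,\nu_{L^{(j)}}(\dd x)$ rather than bare moments of $L_1^{(j)}$. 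Care is also needed to confirm that condition (d), finiteness of all entries of $\mathbf{F}_{k,k}$, supplies exactly the integrability required so that every source term $\fatQ^\top\circ\mathbf{F}_{k,n}^\top$ for $1\le n\le k$ is finite, closing the induction.
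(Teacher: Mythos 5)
Your proposal takes essentially the same route as the paper: the paper's proof consists precisely of the observation that the recursion follows by direct computation from \cite[Thm.~4.8 and Rem.~4.9]{BSdT} once the special structure of the MAP $((U,K),J)$ in \eqref{eq_BNSUK} is inserted, noting that $U$ and $K$ have no Gaussian components (so all quadratic-variation terms vanish) and can jump simultaneously only at times of regime switches, exactly as you outline. One cosmetic slip: the within-regime part $\int_{(0,t]}\dd\eta_s^{(J_s)}$ of $K$ is not continuous but a pure-jump process driven by the time-changed subordinators; your subsequent handling of its moments via $\lambda(j)$ and $\nu_{L^{(j)}}$ is nevertheless correct.
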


The elaboration of moments of the increments of the price process $G_t^{(r)}:= G_{t+r}-G_t$ in the MSBNS model is more complicated than in the case of the MSCOGARCH treated in Proposition \ref{prop_momentslogreturns}. Therefore, large parts of the next proposition focus on the martingale part of the price process, i.e. we choose $\mu(j)\equiv\beta(j)\equiv0$, and exclude the leverage term, i.e. we set $\rho(j)\equiv 0$. As in the case of the MSCOGARCH model, and in agreement with the behavior of the BNS model, cf. \cite[Section 4]{BarndorffNielsen-Shephard}, we observe that under these conditions, log-returns are uncorrelated on disjoint time intervals, but squared log-returns are in general correlated.

	\begin{proposition}\label{prop_momentslogreturnsBNS}
		Consider the MSBNS price process $G$ defined in \eqref{eq-MSBNS} and assume that the squared volatility $V$ is strictly stationary. 
		\begin{enumerate}
			\item 	Assume that $\EE[(L_1^{(j)})^2]<\infty$ for all $j$, that all entries of $\mathbf{F}_{0,2}$ are finite, and that the conditions \eqref{eq-condmomentsBNS} are fulfilled for $k=1$. Then for any $t,r\geq 0$ it holds
			\begin{align*}
				\EE_\pi \big[G_t^{(r)}\big]&=\fatone^\top \diag(\mu) \int_0^r e^{\fatQ^\top s} \dd s \cdot \fatpi \\
				& \quad - \fatone^\top \diag(\beta) \mathbf{\Psi}_\xi(-1)^{-1} \Big( \diag\big(\lambda(j) \EE[L_1^{(j)}], j\in S\big) + \fatQ^\top \circ \mathbf{F}_{1,0}^\top \Big) \cdot \fatpi \cdot  r.
			\end{align*}
		\item Assume that $\mu(j)\equiv\beta(j)\equiv\rho(j) \equiv 0$, and  that the conditions \eqref{eq-condmomentsBNS} are fulfilled for $k=1$. Then for any $t\geq 0$ and $h\geq r >0$ it holds
		\begin{align*}
			\EE_\pi\Big[(G_t^{(r)})^2\Big] &= -\fatone^\top  \mathbf{\Psi}_\xi(-1)^{-1} \Big( \diag\big(\lambda(j) \EE[L_1^{(j)}], j\in S\big) + \fatQ^\top \circ \mathbf{F}_{1,0}^\top \Big) \cdot \fatpi \cdot  r, \quad \text{and}\\
			\cov_\pi\big(G_t^{(r)}, G_{t+h}^{(r)}\big) & = 0.
			\end{align*}
			\item Further, under the conditions \eqref{eq-condmomentsBNS} for $k=2$, assuming that $\mu(j)\equiv \beta(j)\equiv \rho(j) \equiv 0$,
		\begin{align*}
			\lefteqn{\cov_\pi\Big((G_t^{(r)})^2, (G_{t+h}^{(r)})^2 \Big)}\\ 
			&= \fatone^\top \mathbf{\Psi}_\xi(-1)^{-1} \Big( e^{h \mathbf{\Psi}_\xi(-1)} - e^{(h-r) \mathbf{\Psi}_\xi(-1)}\Big) \EE_\pi[G_r^2 V_r \mathbf{e}_{J_r}] \\
			& \quad + \fatone^\top \mathbf{\Psi}_\xi(-1)^{-1} \cdot\\
			& \qquad \cdot  \int_0^h \Big(e^{\mathbf{\Psi}_\xi(-1)(h-u)}- e^{\mathbf{\Psi}_\xi(-1)((h-u-r)\vee 0)}\Big) \Big(\diag\big(\lambda(j) \EE[L_1^{(j)}], j\in S\big) + \fatQ^\top \circ \mathbf{F}_{1,1}^\top\Big) e^{\fatQ^\top u} \dd u\\
			&\qquad \cdot 
			\EE_\pi[G_r^2 \mathbf{e}_{J_r}]\\ 
			& \quad - \Big(\fatone^\top  \mathbf{\Psi}_\xi(-1)^{-1} \Big( \diag\big(\lambda(j) \EE[L_1^{(j)}],j\in S\big) + \fatQ^\top \circ \mathbf{F}_{1,0}^\top \Big) \cdot \fatpi \Big)^2 \cdot  r^2.
		\end{align*}
	\end{enumerate}
	\end{proposition}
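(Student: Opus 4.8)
The plan is to follow the blueprint laid out in the proof of Proposition \ref{prop_momentslogreturns} for the MSCOGARCH model, adapting the computations to the MSBNS setting where the price increments are driven by a Brownian motion rather than a jump L\'evy process. First I would compute $\EE_\pi[G_t^{(r)}]$ in part (1). Since $G_t^{(r)} = \int_{(t,t+r]} \mu(J_s)\,\dd s + \int_{(t,t+r]} \beta(J_s) V_s\,\dd s + \int_{(t,t+r]} \sqrt{V_s}\,\dd W_s$ (with $\rho\equiv 0$), the Brownian integral vanishes in expectation by the martingale property, the drift term $\int_{(t,t+r]}\mu(J_s)\,\dd s$ contributes $\fatone^\top \diag(\mu)\int_0^r e^{\fatQ^\top s}\,\dd s\cdot\fatpi$ under stationarity of $J$, and the $\beta$-term is evaluated using the stationary mean $\EE_\pi[V_s]$, which by Proposition \ref{prop_momentsstatBNS} (solving the $k=1$ recursion) equals $-\fatone^\top\mathbf{\Psi}_\xi(-1)^{-1}(\diag(\lambda(j)\EE[L_1^{(j)}],j\in S)+\fatQ^\top\circ\mathbf{F}_{1,1}^\top)\fatpi$.

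For part (2), with $\mu\equiv\beta\equiv\rho\equiv 0$ the price reduces to the pure Brownian-integral martingale $G_t=\int_{(0,t]}\sqrt{V_s}\,\dd W_s$. Conditioning on the whole volatility path $\bar\FF$ and using the It\^o isometry gives $\EE[(G_t^{(r)})^2\mid\bar\FF] = \int_{(t,t+r]} V_s\,\dd s$, so that $\EE_\pi[(G_t^{(r)})^2]=\int_0^r\EE_\pi[V_s]\,\dd s = r\,\EE_\pi[V_\infty]$, yielding the stated formula via Proposition \ref{prop_momentsstatBNS}. For the covariance on disjoint intervals I would use that for $h\geq r$ the increments $G_t^{(r)}$ and $G_{t+h}^{(r)}$ involve the Brownian motion over non-overlapping time windows; conditioning on $\bar\FF$ makes both factors conditionally centered Gaussian integrals with independent Brownian increments, so the conditional covariance is zero, and hence $\cov_\pi(G_t^{(r)},G_{t+h}^{(r)})=0$.

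Part (3) is the substantive computation. The plan is to replicate the conditioning argument from the proof of Proposition \ref{prop_momentslogreturns}. Writing $\cov_\pi((G_t^{(r)})^2,(G_{t+h}^{(r)})^2)=\EE_\pi[G_r^2\,\EE[(G_h^{(r)})^2\mid\bar\FF_r]] - (\EE_\pi[V_\infty]\,r)^2$ by stationarity, I would evaluate the inner conditional expectation via the It\^o isometry as $\EE[\int_{(h,h+r]} V_s\,\dd s\mid\bar\FF_r]=\int_{(h,h+r]}\EE[V_s\mid\bar\FF_r]\,\dd s$. The key ingredient is then the explicit conditional mean $\EE[V_s\mid\bar\FF_r]=\fatone^\top e^{\mathbf{\Psi}_\xi(-1)(s-r)}\mathbf{e}_{J_r} V_r + \fatone^\top\int_0^{s-r} e^{\mathbf{\Psi}_\xi(-1)(s-r-u)}(\diag(\lambda(j)\EE[L_1^{(j)}],j\in S)+\fatQ^\top\circ\mathbf{F}_{1,1}^\top)e^{\fatQ^\top u}\mathbf{e}_{J_r}\,\dd u$, which comes from Proposition \ref{prop_momentsstatBNS} together with \cite[Thm.\ 4.8]{BSdT} exactly as in the MSCOGARCH case. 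Substituting this into the integral over $(h,h+r]$, splitting into the $V_r$-proportional term and the deterministic integral term, and collecting the resulting matrix integrals produces the two leading contributions involving $\EE_\pi[G_r^2 V_r\mathbf{e}_{J_r}]$ and $\EE_\pi[G_r^2\mathbf{e}_{J_r}]$; the subtracted square gives the final term.

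The main obstacle I anticipate is bookkeeping in part (3): correctly carrying the matrix-exponential kernels through the double time integral over $s\in(h,h+r]$ and $u\in(0,s-r)$, and keeping track of where the truncation $((h-u-r)\vee 0)$ arises from the interaction between the outer integration window and the inner convolution. This is the same structural difficulty as in Proposition \ref{prop_momentslogreturns}, so I would lean on that proof as a template, taking care to replace the jump-driven price variance $\EE[L_1^2]\int V_s\,\dd s$ of the MSCOGARCH model by the Brownian isometry $\int V_s\,\dd s$ (effectively setting the analogue of $\EE[L_1^2]$ to $1$), and to substitute the MSBNS-specific drift matrix $\diag(\lambda(j)\EE[L_1^{(j)}],j\in S)+\fatQ^\top\circ\mathbf{F}_{1,1}^\top$ in place of the MSCOGARCH matrix $\diag(\mathbf{\beta})+\fatQ^\top\circ\mathbf{F}_{1,1}^\top$. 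Everything else reduces to the direct computations already validated in the MSCOGARCH proof.
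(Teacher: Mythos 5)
Your treatment of parts (2) and (3) follows the paper's own proof essentially step for step: the It\^o isometry conditional on the (independent) volatility path gives $\EE_\pi[(G_t^{(r)})^2]=\int_0^r\EE_\pi[V_s]\,\dd s=r\,\EE_\pi[V_\infty]$ and the vanishing covariance over disjoint windows, and for the squared increments you use the same conditioning identity $\cov_\pi\big((G_t^{(r)})^2,(G_{t+h}^{(r)})^2\big)=\EE_\pi\big[G_r^2\,\EE[(G_h^{(r)})^2\mid\tilde{\cF}_r]\big]-(r\,\EE_\pi[V_\infty])^2$ together with the conditional mean formula $\EE[V_s\mid\tilde{\cF}_r]$ obtained from Proposition \ref{prop_momentsstatBNS} and \cite[Thm. 4.8]{BSdT}, which is exactly the paper's route. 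One remark in your favor: your derivation produces $\mathbf{F}_{1,1}$ where the printed statement of parts (1)--(2) shows $\mathbf{F}_{1,0}$; since the jumps of $K$ at regime switches are $e^{-\Delta\xi_{2,\cdot}}\Delta\eta_{2,\cdot}$, the $k=1$ recursion indeed yields $\mathbf{F}_{1,1}$, so this mismatch is an inconsistency internal to the paper rather than a flaw of your argument.

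Part (1), however, contains two genuine gaps. First, the statement of part (1) does \emph{not} assume $\rho\equiv 0$; the price increment still carries the leverage term $\int_{(t,t+r]}\rho(J_{s-})\,\dd\tilde{\eta}_s$, which you discard by fiat. The paper disposes of it by noting that $\tilde{\eta}$ is a square-integrable martingale, so this stochastic integral (with bounded integrand, $S$ being finite) has zero expectation --- and this is precisely where the hypotheses $\EE[(L_1^{(j)})^2]<\infty$ and finiteness of $\mathbf{F}_{0,2}$ are consumed. That your proposal never uses these two assumptions should have been a warning sign that something in the statement was being ignored. Second, your evaluation of the $\beta$-term via ``the stationary mean $\EE_\pi[V_s]$'' does not work as written: $\beta(J_s)$ and $V_s$ are dependent, so $\EE_\pi[\beta(J_s)V_s]$ cannot be computed from the scalar $\EE_\pi[V_s]$ alone. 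One needs the state-indexed vector $\EE_\pi[V_s\mathbf{e}_{J_s}]=-\mathbf{\Psi}_\xi(-1)^{-1}\big(\diag(\lambda(j)\EE[L_1^{(j)}],j\in S)+\fatQ^\top\circ\mathbf{F}_{1,1}^\top\big)\fatpi$, which the paper obtains via \cite[Lemma 3.11]{BSdT} applied to the MAP $(\int_0^\cdot\beta(J_s)\,\dd s,J)$; this is exactly why $\diag(\beta)$ ends up sandwiched between $\fatone^\top$ and $\mathbf{\Psi}_\xi(-1)^{-1}$ in the stated formula, a structure your sketch cannot reproduce.
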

	\begin{proof}
	By definition of $G$ we have
	\begin{align*}
		\EE_\pi[G_t^{(r)}] &= \EE_\pi\bigg[\int_{(t,t+r]} \mu(J_s) \dd s \bigg]  + \EE_\pi\bigg[\int_{(t,t+r]} \beta(J_s) V_s \dd s \bigg]  + \EE_\pi\bigg[\int_{(t,t+r]} \sqrt{V_s} \dd W_s \bigg]  \\
		& \quad + \EE_\pi\bigg[\int_{(t,t+r]} \rho(J_{s-})\dd \tilde{\eta}_s \bigg] ,
	\end{align*}
where the last two terms are zero due to the integrators being square integrable martingales, and the integrands having finite second moments by our assumptions. Concerning the first term, we note that $(\int_0^\cdot \mu(J_s) \dd s, J)$ is a MAP, and hence we can compute its mean via \cite[Thm. 3.8]{BSdT}  to obtain
\begin{align*} \EE_\pi\bigg[\int_{(t,t+r]} \mu(J_s) \dd s \bigg]& = \EE_\pi\bigg[\EE_{J_t}\bigg[\int_{(0,r]} \mu(J_s) \dd s \bigg]\bigg]
= \EE_\pi\bigg[ \fatone^\top \diag(\mu) \int_0^r e^{\fatQ^\top s} \dd s \cdot \mathbf{e}_{J_t} \bigg] \\
&= \fatone^\top \diag(\mu) \int_0^r e^{\fatQ^\top s} \dd s \cdot \fatpi.\end{align*} 
For the second term, as 
$$\int_{(t,t+r]} \beta(J_s) V_s \dd s = \int_{(t,t+r]}  V_s \dd \Big( \int_0^s \beta(J_u) \dd u\Big),$$
and as $(\int_0^\cdot \beta(J_s) \dd s, J)$ is a MAP, an application of \cite[Lemma 3.11]{BSdT} yields
\begin{align*}
	\EE_\pi\bigg[\int_{(t,t+r]} \beta(J_s) V_s \dd s \bigg] & = \fatone^\top \diag(\beta) \int_t^{t+r} \EE_\pi [V_s \mathbf{e}_{J_s}] \dd s = \fatone^\top \diag(\beta) \EE_\pi[V_0 \mathbf{e}_{J_0}] r,
\end{align*}
where the last equality follows by stationarity of $(V,J)$. Together with Proposition \ref{prop_momentsstatBNS} we may now derive the stated formula for $\EE_\pi[G_t^{(r)}]$.\\
For the second moment we note that
\begin{align*}
		\EE_\pi\Big[(G_t^{(r)})^2\Big]&= \EE_\pi \bigg[\bigg(\int_{(t,t+r]} \sqrt{V_s} \dd W_s \bigg)^2 \bigg]  = \int_{(t,t+r]} \EE_\pi[V_s] \dd s
\end{align*}
from which the stated formula follows via Proposition \ref{prop_momentsstatBNS} due to stationarity of $(V,J)$. \\
The fact that $\cov_\pi(G_t^{(r)}, G_{t+h}^{(r)})=0$ can be shown via It\^o's isometry as we are here only treating the martingale part of the price process.\\
Concerning the squared increments of the martingale part of the price process, a computation similar to the proof of Proposition \ref{prop_momentslogreturns} yields
\begin{align}
	\EE_\pi\Big[ (G_t^{(r)})^2 (G_{t+h}^{(r)})^2\Big] &= \EE_\pi\Big[G_r^2 \EE\big[ (G_{t+h}^{(r)})^2| \tilde{\cF}_r \big]\Big]
	= \EE_\pi\Big[G_r^2 \int_{(h,h+r]} \EE\big[V_s| \tilde{\cF}_r \big] \dd s \Big], \label{eq-BNSlogreturnstep}
\end{align}
where $\tilde{\FF}=(\tilde{\cF}_t)_{t\geq 0}$ is the natural filtration induced by $(V,J,W)$. Now, in analogy to the proof of Proposition \ref{prop_momentslogreturns} and \cite[Proof of Prop. 5.1]{KLM2004}, 
\begin{align*}
	\EE[V_s|\tilde{\cF}_r] &= \EE_{J_r}[V_{s-r}] + \big(V_r-\EE[V_0]\big) \EE_{J_r}\big[e^{-\xi_{s-r}}|\tilde{\cF}_0\big]\\
	&= \fatone^\top e^{\mathbf{\Psi}_\xi(-1) (s-r)} \mathbf{e}_{J_r} V_r \\ & \quad + \fatone^\top \int_0^{s-r} e^{\mathbf{\Psi}_\xi(-1)(s-r-u)} \Big(\diag\big(\lambda(j)\EE[L_1^{(j)}],j\in S\big) + \Big(\fatQ^\top \circ \mathbf{F}_{1,1}^\top \Big)\Big) e^{\fatQ^\top u} \mathbf{e}_{J_r} \dd u,
\end{align*}
and inserting this in \eqref{eq-BNSlogreturnstep} yields the result.
	\end{proof}

\subsection*{A special case with no jumps in the $\xi$-component}

As jumps at a regime switch in the the $\xi$-component lead to a sudden drop in the volatility process or at least dampen the increase induced by a simultaneous jump in $\eta$, and as typical volatility jumps are positive, it seems reasonable to consider the special case of an MSBNS model under the assumption of no additional jumps in $\xi$. Following Remark \ref{rem-defMSBNS} in this case we observe only upward jumps of the volatility at times of a regime switch. In particular we then have 
$$\Delta V_{T_n}=\Delta \eta_{T_n}, \quad n\in\NN$$ 
and hence these exogeneous jumps can be modeled directly via the distributions $F^{ij}$. \\
Moreover, this assumption leads to various other simplifications of the model:
\begin{itemize}
	\item Concerning stationarity we observe from Lemma \ref{lem_MSBNSdegenerate} that a discrete solution is only possible if and only if $\Delta \eta^{ij} _{T_n} = c_j-c_i$ for all $n$, i.e. for deterministic jump heights.\\
	Excluding the discrete solutions from Theorem \ref{thm-statMSBNS} we observe that a stationary solution exists if for all $j\in S$
	\begin{equation*}
		\int_{(1,\infty)} \log q \,\nu_{L^{(j)}} (\dd q) <\infty\;  \text{  and } \;	\int_{(1,\infty)} \log q \,\PP_j\bigg(\sup_{T_n\leq  \tau_1(j)}  \Delta \eta_{2,T_n} \in \dd q\bigg)<\infty, 
	\end{equation*}
that is, if all driving Lévy processes and the distributions $F^{ij}$ have a finite log-moment.
\item In the computations of the moments we note that 	$\mathbf{\Psi}_\xi$ is diagonal, as $\mathbf{F}_{k,0}= \mathbf{0}$. In particular we observe that $e^{\mathbf{\Psi}_\xi(-1)} = \diag(e^{-\lambda(j)},j\in S)$ and to illustrate the resulting simplifications, we note by example that from Proposition \ref{prop_meanvarianceMSBNS}
	\begin{align*}
	\EE_j[V_t]&= e^{-\lambda(j)} \EE[V_0] \\
	& \quad + \bone^\top \int_0^t \diag\big(e^{-\lambda(i) (t-s)}, i\in S\big)  \left( \diag\big(\lambda(i) \EE[L_1^{(i)}], i\in S\big) + \fatQ^\top \circ \mathbf{F}_{1,1}^\top \right)e^{\fatQ^\top \cdot s} \mathbf{e}_j \dd s,
\end{align*}
while from Proposition \ref{prop_momentslogreturnsBNS}
	\begin{align*}
	\EE_\pi \big[G_t^{(r)}\big]&=\fatone^\top \diag(\mu) \int_0^r e^{\fatQ^\top s} \dd s \cdot \fatpi  + \fatone^\top \diag\big(\beta(j) \EE[L_1^{(j)}], j\in S\big)  \cdot \fatpi \cdot  r.
\end{align*}
Moreover, in this case $\mathbf{\Psi}_\xi(-k)$ has a negative maximal eigenvalue for all $k\in\NN$ and in particular $\cov_j(V_t, V_s) $ is always decreasing exponentially as $t-s$ grows.
\end{itemize}

\section{Discussion and outlook}\label{S-end}

We have proposed two Markov switching volatility models generalizing the well-known BNS and COGARCH model to a random environment. Despite the newly gained flexibility the resulting models remain mathematically tractable and can be analyzed nicely using the recently established theory on MMGOU processes.\\

While the classical COGARCH model only depends on a single source of randomness (the driving Lévy process $L$), the MSCOGARCH model additionally depends on the environment, modeled by the continuous-time Markov chain $J$, and the exogeneous shocks at times of a regime switch. Still, basic features of the COGARCH, such as the existence of a stationary version of the volatility process, a recursion formula for the moments of the volatility process, and the autocorrelation structure of the increments of the price process are kept.\\

This is similar in case of the MSBNS model which also inherits many properties of the BNS model. While the BNS model in its original form already depends on two independent sources of randomness (the driving subordinator $L$ and the Brownian motion), the MSBNS model  incorporates dependence of the driving process and the dynamics (modeled by $\lambda$) on the environment. Further, exogeneous shocks at times of a regime switch allow for an even higher flexibility in the model, while basic stylized features of financial data are kept.\\

Another well-known property of the BNS and COGARCH volatility processes is heaviness of the tails of the stationary distribution under certain conditions. In particular, the COGARCH volatility and price process admit Pareto-like tails under weak moment conditions, see \cite[Thm.~6]{KLM06}. This can also be expected in the case of the MSCOGARCH volatility, as by construction it fulfills a random recurrence equation of the form 
$V_t= A_{s,t} V_s + B_{s,t}.$ More precisely, considering the return times $\tau_n(j)$ of the Markov chain $J$, under $\PP_j$ it holds 
$$V_{\tau_n(j)} = e^{-(\xi_{\tau_n(j)} - \xi_{\tau_{n-1}(j)})} V_{\tau_{n-1}(j)} + e^{-\xi_{\tau_n(j)} }\int_{(\tau_{n-1}(j), \tau_n(j)]} e^{\xi_{s-}} \dd \eta_s =: A_n V_{\tau_{n-1}(j)} + B_n,$$
for an i.i.d. sequence $(A_n,B_n)_{n\in\NN}$. Hence applying the classical results on tails of perpetuities by Kesten \cite{Kesten} and Goldie \cite{Goldie_TailsiidRRE} yields conditions for heavy tailed volatility distributions in the MSCOGARCH model. This rough sampling only at return times however will not allow for a deeper insight into the contributions of different regimes to the tail behaviour. Hence, a deeper study has to rely on tails of perpetuities in a random environment which so far have only be considered in some special cases (e.g. in \cite{roitershtein}), that do not fit the MSCOGARCH or the MSBNS model. We therefore refrain from any details here.

\section*{Acknowledgements}

My thanks go to M.Sc. Markus Vogelsang for providing me with a Python program library for the simulation of various MMGOU processes.

\footnotesize

\end{document}